\titlespacing*{\section}{0pt}{*2}{*1}
\titlespacing*{\subsection}{0pt}{*2}{*1}
\pgfplotsset{compat=1.10}
\newcommand{\R}{\mathcal{R}}
\newtheorem{Corollary}{Corollary}
\newtheorem{Proposition}{Proposition}
\newtheorem{Theorem}{Theorem}
\newtheorem{Assumption}{Assumption}
\newcommand{\supp}{{\rm supp}}
\newcommand{\cid}{\overset{d}{\to}}
\def\cI{\mathcal{I}}
\def\bbK{\bbK}
\def\cR{\mathcal{R}}
\def\cS{\mathcal{S}}
\def\1{{\mathbf 1}}
\def\bbK{\mathbb{K}}
\def\nmax{ n_{\max} }
\def\beq{\begin{equation}}
\def\eeq{\end{equation}}
\def\tr{^{\intercal}}
\newcommand{\norm}[1]{\left\lVert#1\right\rVert}
 \newcommand{\ignore}[1]{}
\DeclareMathOperator*{\argmin}{arg\,min} 
\def\sig{\sigma}
\def\gam{\gamma}
\def\Sig{\Sigma}
\def\eps{\varepsilon}
\newcommand{\Cov}{\mathrm{Cov}}
\newcommand{\Var}{\mathrm{Var}}
\newcommand{\E}{\mathrm{E}}
\newcommand{\bI}{\mathbbm{1}}
\newcommand{\I}{\cI}
\newcommand{\Iss}{\widehat I}
\newcommand{\fx}{f( \beta_0 + x_i \beta )}
\newcommand{\hfx}{f( \hat \beta_0 +   x_i \hat\beta  )}
\newcommand{\gx}{g( \gamma_0 + x_i \gamma)}
\newcommand{\hgx}{g( \hat \gamma_0 + x_i \hat\gamma)}
\newcommand{\hsfx}{f(\hat\beta_0 + x_i \hat\beta)}
\newcommand{\hsgx}{g(\hat \gamma_0 + x_i \hat\gamma)}
\newcommand{\prop}{\mathrm{prop}}
\DeclareMathOperator{\Tr}{Tr}
\newcommand{\blind}{1}
\begin{document}

\def\spacingset#1{\renewcommand{\baselinestretch}%
{#1}\small\normalsize} \spacingset{1}


\date{}

\if1\blind
{
	\title{A Regression-based Approach to Robust Estimation and Inference for Genetic Covariance} 

 \author{Jianqiao Wang$^1$, Sai Li$^2$, Hongzhe Li$^3$
 	\hspace{.2cm}\\
 	$^1$Department of Biostatistics\\ Harvard University \\
    $^2$Institute of Statistics and Data Science,  \\
    Renmin University of China\\
    $^3$Department of Biostatistics, Epidemiology and Informatics \\ University of Pennsylvania\\
     }
  \maketitle
} \fi

\if0\blind
{
  \bigskip
  
  \begin{center}
    {\LARGE\bf A Regression-based Approach to Robust Estimation and Inference for Genetic Covariance 
    }
\end{center}
  \medskip
} \fi

\thispagestyle{empty}
\begin{abstract}
\setstretch{1.2}
Genome-wide association studies (GWAS) have identified thousands of genetic variants associated with  complex traits, and some variants are shown to be associated with multiple  complex traits.  Genetic covariance between two traits is defined as the underlying covariance of genetic effects and can be used to measure the shared genetic architecture.  The data used to estimate such a genetic covariance can be from the same group or different groups of individuals,  and the traits can be of different types or collected based on different study designs. This paper proposes a unified regression-based approach to robust  estimation and inference for genetic covariance of  general  traits that may be associated with genetic variants nonlinearly.   The asymptotic properties of the proposed estimator are provided  and are shown to be robust under certain  model mis-specification. Our method under  linear working models provides a robust inference for the narrow-sense genetic covariance, even when both linear models are mis-specified.   Numerical experiments are performed to support the theoretical results. Our method  is applied to an outbred mice GWAS data set to study the overlapping genetic effects between the behavioral and physiological phenotypes. The real data results  reveal interesting genetic covariance among different mice developmental traits. 

\end{abstract}

\noindent%
{\it Keywords:}  genetic relatedness,   model mis-specification,  narrow-sense genetic covariance, regularization, bias correction
\vfill
\spacingset{1.695} 

\clearpage
\pagenumbering{arabic} 
\section{Introduction}
\label{sec:intro}

Genome-wide association studies (GWAS) have identified thousands of variants related to complex traits. Some  complex traits are shown to have a shared genetic etiology, including various autoimmune diseases \citep{YunLi} and psychiatric disorders \citep{cross2019genomic}. Studying the shared genetic architecture and investigating the relationship between the genetics of different traits can provide insights into the underlying biological mechanism. 
There has been a great  interest in quantifying the overlapping genetic effects between pairs of traits based on GWAS data. With individual-level  genotype data,  genetic effects on the traits can be  modeled by some functions of the observed genetic variants such as linear functions, and the single nucleotide polymorphism (SNP)-based estimate of genetic covariance  can be  derived \citep{lee2012estimation,bulik2015atlas} based on the assumed models. 

Although estimating the shared  genetic effects has received much attention recently, several issues remain to be addressed. First, most existing literature focuses on  traits that are generated from linear models. For binary traits, such as the occurrences of diseases, linear models are not suitable. However, there are few approaches with theoretical guarantees for dealing with nonlinear trait models. 
Second, the true underlying models for the traits are unknown in practice and need to be specified. When the  models are mis-specified, the estimated genetic effects and the corresponding genetic covariance can be biased. For estimating the shared  genetic effects between two traits, model mis-specification is more likely to happen. Thus, it is important to develop robust methods for estimating the genetic covariance. 
Third, the genetic  data sets can be collected based on different designs, further complicating the analysis of genetic covariance.  For example, two traits and the corresponding genotypes may be collected on  the same samples or independent samples. In a more challenging scenario, one trait may be collected from a cohort study, while the other may be from a case-control study. 
As far as we know, existing methods cannot be directly applied to such data or do not have any statistical guarantee.  

Motivated by the challenges mentioned above, we study estimation and inference for genetic covariance based on the individual-level GWAS data. We consider generic population models for the trait that is possibly associated with the genotypes nonlinearly. Specifically,  we consider the following models for the two traits, 
\begin{equation}
\label{true-model}
y_i= f^{*}(x_i)+ \eps^{*}_i, \quad z_i=g^{*}(x_i)+ v^{*}_i, 
\end{equation}
where $y_i$ and $z_i$ are two trait values, including  continuous complex trait measures,  disease outcomes, or gene expressions, and $x_i\tr = (x_{i1}, \cdots, x_{ip} ) \tr \in\R^{p}$ denotes the $i$-th observation of $p$ genetic variants, each coded as 0, 1 and 2.  In typical GWAS,  $p$ can be larger and much larger than the sample sizes. The functions $f^*(\cdot)$ and $g^*(\cdot)$ denote the true conditional mean functions of $y_i$ and $z_i$, respectively,  such that $\E[\eps^{*}_i\mid x_i]=\E[v^*_i\mid x_i]=0$. In genetic terms, $f^{*}(x_i)$ is the genetic effect  of $y_i$, representing the total effects of all the  genetic variants considered.  
The error terms $\varepsilon^{*}_i$ and $ v^{*}_i $ can be dependent due to possible confounding effects. 
The functions $f^*(\cdot)$ and $g^*(\cdot)$ can be known or unknown.  We emphasize that Model \eqref{true-model} includes a general class of trait models, which can be parametric or non-parametric. 

First considered in \cite{searle1961phenotypic}, the genetic covariance of traits $y_i$ and $z_i$ is defined as the covariance of their conditional mean functions $f^*(x_i)$ and $g^*(x_i)$ ,
\begin{equation}
\label{def:true-para}
I^{*} = \Cov(\E[y_i|x_i], \E[z_i|x_i]) = \Cov(f^{*}(x_i), g^{*}(x_i)).
\end{equation} 
Invoking the law of total variance formula, parameter $I^*$ represents the covariance of two traits explained by genetic variants. When the traits are binary,  this genetic covariance is defined at the observed scale instead of the liability scale as commonly used in probit-mixed effect models. 
A closely related concept is heritability. The heritability of trait $y_i$ is defined as $\Var(f^{*}(x_i))/ \Var(y_i)$, which is the proportion of phenotypic variance explained by genetic variants. 

While $I^*$ captures the covariance due to total genetic effects, sometimes the contribution of additive genetic effects can be of interest, especially for continuous complex traits. Analogous to the narrow-sense heritability \citep{tenesa2013heritability}, the narrow-sense genetic covariance measures the additive genetic covariance by fitting linear models to the traits.
Specifically, narrow-sense genetic covariance is defined as the bilinear functional $\beta\tr \Sigma \gamma$, where $\beta$ and $\gamma$ are the regression coefficients in the linear models representing the additive effects, $f^*(x_i)=\beta_0+x_i\beta$ and  $g^*(x_i)=\gamma_0+x_i\gamma$, and $\Sig$ is the covariance of $x_i$. We mention that even if $f^*(\cdot)$ or $g^*(\cdot)$ are nonlinear functions, one can still define $\beta$ and $\gamma$ through working linear models. That is, even if the linear models are wrongly specified, the definition of narrow-sense genetic covariance is still valid but $I^*\neq \beta\tr \Sigma \gamma$ in general. Formal definitions of $\beta$ and $\gamma$ are given in (\ref{def:def-beta}).
As far as we know, the estimation and inference for $\beta\tr \Sigma \gamma$ have not been studied when the linear models are  mis-specified. It is unknown whether existing methods are valid for the inference on $\beta\tr \Sigma \gamma$ under model mis-specification.

\subsection{Related literature}
Many methods have been proposed for estimating heritability and genetic covariance in genetics literature. Among them, the linear mixed-effects model is one of the most popular choices and is adopted in the GCTA-GREML method  \citep{lee2013genetic} and  linkage disequilibrium score regression (LDSC) method \citep{bulik2015atlas}. In a linear mixed-effects model, the effects of genetic variants are assumed to be \textit{i.i.d.} normally distributed and the genetic covariance equals the covariance between the random effects with the genotype data standardized to have a unit variance. 
A closely related model is the latent liability model \citep{lee2013genetic} for binary traits, 
which  can be viewed as a random-effects probit model.  
Despite their  popularity, the assumptions of liability model or linear random-effects models have been questioned and  their estimates are not robust to the distributional assumptions on the effects \citep{speed2017reevaluation, WangLi}.   Some partition-based methods have been proposed to improve the heritability estimation \citep{yang2015genetic, gazal2017linkage}, where  more complicated random-effects models are used to make the estimation accurate \citep{gazal2019reconciling, evans2018comparison}. 

The estimation and inference for heritability and genetic relatedness in fixed-effects linear trait models have been studied in statistics literature.  Assuming correctly specified high-dimensional linear models, \cite{cai2018semi} studies estimation and inference for heritability based on the penalized regressions. 
\cite{verzelen2018adaptive} proposes an adaptive procedure to estimate heritability by combining the sparsity-based method and the method of moments in linear models. \cite{guo2017optimal} studies the estimation of genetic relatedness $\beta\tr \gamma$ between two continuous traits in correctly-specified high-dimensional linear models, which is closely related to the genetic covariance.  They assume that the noises in the two traits models are independent, but do not provide methods for inference.

\subsection{Our contributions}
As we reviewed above,  most of existing methods for estimating the genetic covariance  are developed  for continuous traits assuming correctly specified linear models and the independence of the errors of the  two traits. A robust  inference procedure is needed  when any of the model assumptions are violated. 
Our paper aims to fill this gap. 
We propose a regression-based approach to the estimation and inference of genetic covariance $I^*$ defined as equation \eqref{def:true-para} for the traits that can be nonlinearly  associated with  high-dimensional genetic variants.   Based on sparse  high-dimensional generalized linear models (GLMs) as working models, we propose an estimator and construct the confidence interval for genetic covariance. Our method makes use of all the available observations, allowing  samples for the two traits to overlap.   Our approach can also be used to estimate and make inferences for the narrow-sense genetic covariance  $\beta\tr \Sigma \gamma$ when linear working models are assumed.

We  show  the robustness  of the proposed estimator of  genetic covariance. 
Firstly,  when both working models are correctly specified GLMs, the proposed method provides an asymptotically normal estimator for $I^*$ under proper conditions.
Secondly,  when the models are possibly mis-specified, the proposed estimator is still consistent as long as one conditional mean model is correctly specified. If one linear working model is used and is correctly specified, then the asymptotic normality of our estimator of   $I^*$ still holds under certain sparsity conditions.



The rest of the paper is organized as follows. Section  \ref{sec:method} introduces a unified method of estimation and inference  for genetic covariance and narrow-sense genetic covariance. Section \ref{sec:theory} establishes the theoretical properties of our estimator  for the genetic covariance under  correct or mis-specified models, including  the method for estimating the narrow-sense generic covariance. 
In Section \ref{sec:simulation}, the performance of the estimator is evaluated in different settings using numerical experiments. In Section \ref{sec:real-data}, the proposed method is applied to an outbred mice data set to study the genetic covariance between the behavioral and physiological traits. A discussion is provided in Section \ref{sec:discuss}. The proofs of main theorems and the extended simulation studies are given in the Supplementary Materials.

\paragraph{Notations} Given a symmetric matrix $A$, we use $\norm{A}_{2}$ to denote the $\ell_2$-operator norm and \(\lambda_{\max }({A})\) and \(\lambda_{\min }({A})\) respectively to denote the largest and the smallest eigenvalue of $A$. For a design matrix $X\in\R^{n\times p},$ let $x_i$ denote the $i$-th row of $X$, and $x_{ij}$ denote the $j$-th element of the row vector $x_i.$ For an index set $\I$, we denote its cardinality as $|\cI|$. For a vector $u$, its support is represented by $\supp(u).$ For two positive sequences $a_n$ and $b_n$, $a_n \lesssim b_n$ means $a_n \leq Cb_n$ for sufficiently large $n$, \(a_{n} \gtrsim b_{n}\) if \(b_{n} \lesssim a_{n}\) and \(a_{n} \asymp b_{n}\) if $a_n \lesssim b_n$ and $b_n \lesssim a_n$. We write $b_{n} \ll a_{n}$ if $\lim \sup b_n/a_n \to 0.$ \(c\) and \(C\) are used to denote generic positive constants that may vary from place to place. 
\ignore{For a random variable \(U,\) its sub-Gaussian norm is defined as \(\|U\|_{\psi_{2}}=\sup _{q \geq 1} \frac{1}{\sqrt{q}}\left(\E|U|^{q}\right)^{\frac{1}{q}}.\) For a random vector \(U \in {\cR}^{p},\) its sub-Gaussian norm is defined as \(\|U\|_{\psi_{2}}=\sup _{v \in \cS^{p-1}}\|\langle v, U\rangle\|_{\psi_{2}}\)  where \(\cS^{p-1}\) is the unit sphere in \({\cR}^{p} .\) For a sequence of random variables $U_n$ indexed by $n,$ we use $U_n \cid N(0,1)$ to represent that $U_n$ converges to standard normal distribution.}

\section{Estimation of genetic covariance and construction of confidence interval}
\label{sec:method}

\subsection{Fitting the working models}
\label{sec:work-model}
To estimate the genetic covariance, we consider fitting potentially mis-specified GLMs to the two traits \citep{nelder1972generalized}. The fitted models are treated as working models.  In GLMs with canonical link functions, the regression coefficients $\beta$ and $\gamma$ are defined as the minimizer of the population negative log-likelihood functions, i.e.,
\begin{align}
&(\beta_0, \beta) = \argmin_{ \beta_0 \in \cR, \, \beta \in \cR^{p}} \E \left[ F(\beta_0 + x_i \beta) - y_i (\beta_0 + x_i \beta ) \right]\nonumber\\
& (\gamma_0,\gamma) = \argmin_{ \gamma_0 \in \cR,\, \gamma \in \cR^{p}} \E [ G(\gamma_0 + x_i \gamma) - z_i (\gamma_0 + x_i \gamma) ].\label{def:def-beta}
\end{align}
Functions  $F(t)$ and $G(t)$ have known parametric forms, for example, $F(t) = t^2/2$ gives a linear model and  $F(t) =  \log(1 + \exp(t))$ gives a logistic model. For differentiable $F(t)$ and $G(t)$, $f(t) = F^{\prime}(t)$ and $g(t) = G^\prime(t)$ are the working models for modeling the conditional mean of the traits. 
We can re-express the working GLM models (\ref{def:def-beta}) as
\begin{align}
\label{working-model}
y_i= f( \beta_0 + x_i \beta) + \eps_i ~~\text{and}~~ z_i= g(\gamma_0 + x_i \gamma) + v_i,
\end{align}
where $\E[x_i \eps_i] = 0$ and $\E[x_i v_i] = 0$ by taking derivatives in \eqref{def:def-beta}. We mention that including the intercepts $\beta_0$ and $\gam_0$ is crucial when the models are possibly mis-specified as it guarantees $\E[\eps_i]=\E[v_i]=0$. For typical GWAS data sets, the number of genetic variants $p$ is very large, we often assume that the coefficients $(\beta, \gamma)$ in these working models are sparse. 
With linear working models $f(\beta_0 + x_i \beta) = \beta_0 + x_i \beta$ and $g(\gamma_0 + x_i \gamma) = \gamma_0 + x_i \gamma$, the narrow-sense genetic covariance is defined as the bilinear functional $\Cov(x_i \beta, x_i \gamma ) = \beta\tr \Sigma \gamma$. Again, it is defined with respect to the working models instead of the true models.

To make inference of $I^*$ defined in (\ref{def:true-para}), our proposal is based on sample splitting. We use index sets $\cI^0_y$ and $\cI^0_z$ to represent the collected individuals for traits $y$ and $z$, respectively. 
We randomly split samples so that samples for estimating the  coefficients in model \eqref{working-model} and samples for constructing the estimator of covariance are independent. Without loss of generality, we assume that $|\cI^{0}_y|=2n_y$ and $|\cI^{0}_z|=2n_z$. We split $\cI^{0}_y$ into two disjoint subsets $\tilde \I_y$ and $\I_y$ with $|\tilde \I_y| = |\I_y| = n_y$ and split $\cI^{0}_z$ into two disjoint subsets  $\tilde \I_z$ and $\I_z$ with $|\tilde \I_z| = |\I_z| = n_z$.  



Based on the samples in  $\tilde \I_y$ and $\tilde \I_z$, the coefficients ${\beta}$ and ${\gamma}$ are estimated by minimizing the following penalized negative log-likelihood functions 
 \begin{align}
 \label{eq:def-hat-beta-gamma}
 (\hat\beta_0, \hat \beta) & = \argmin_{ \beta_0 \in \cR, \, \beta \in \cR^{p}} |\tilde \cI_y|^{-1} \sum_{i \in \tilde \I_y} \left\{ ( F( \beta_0 + x_i \beta ) - y_i (\beta_0 + x_i \beta) \right\} + \lambda_\beta ( |\beta_0| + \norm{\beta}_1) \, ,\\
  (\hat\gamma_0, \hat \gamma) & = \argmin_{ \gamma_0 \in \cR,\, \gamma \in \cR^{p} } |\tilde \cI_z|^{-1} \sum_{i \in \tilde \I_z} \left\{ G( \gamma_0 + x_i \gamma) -  z_i (\gamma_0 + x_i \gamma) \right\} + \lambda_\gamma (|\gamma_0| + \|\gamma\|_1) , \label{eq:def-hat-beta-gamma2}
 \end{align}
where  $\lambda_{\beta}>0$ and $\lambda_{\gamma}>0$  are tuning parameters. 

\subsection{Estimating the genetic covariance}
\label{sec2-1}

In the second step of our approach,  we  construct the estimator for genetic covariance $I^*$ using samples in $ \I_y$ and $ \I_z$.   First, we plug in the coefficient estimates $\hat \beta$ and $\hat \gamma$, and obtain  the residuals $\hat \varepsilon_i = y_i - \hsfx$ for $i \in \I_y$ and $\hat v_i = z_i - \hsgx $ for $i \in \I_z$. To make use of all observed genotypes, we impute the two traits using the predicted values $\hsfx$ and $\hsgx$ from the working GLMs for all $i \in \I_y\cup \I_z.$ Here $\I_y\cup \I_z$ is the index set of all unique samples and $N = |\I_y\cup \I_z|$ denotes its cardinality. The predicted traits are used to estimate  $\E[\fx \gx]$, which  may differ from the predicted values obtained from the true model. 

In model (\ref{true-model}), $I^*$ can be expressed as (dropping $x_i$ for simplicity)
 $
 I^*=\Cov(f_i^{*}, g_i^{*}) =\E [f_i^{*} g_i^{*}] - \E [f_i^{*}]\E [g_i^{*}].
 $
We  estimate $\E [f_i^{*} g_i^{*}]$ and $\E [f_i^{*}]\E [g_i^{*}]$ separately. We mention that $\E [f_i^{*}]$ and $\E [g_i^{*}]$ are zero in linear models but are possibly nonzero for nonlinear functions $f^*(\cdot)$ and $g^*(\cdot)$, and hence require separate estimates. 
To motivate our estimator of $\E [f_i^{*} g_i^{*}],$ let\\ 
$
  \widehat E^{\text{plug} } =N^{-1} \sum_{i \in \I_y \cup \I_z} { \hfx \hgx }
$
   denote the simple plug-in estimator based on the specified working models. The plug-in estimator $\widehat E^{\text{plug}}$ is sensitive to the  model specification. Specifically, let $f_i$  and $g_i$ denote $\fx$ and $\gx$, respectively. The difference between $\E [f_i^*g_i^*]$ and $\E[f_ig_i]$ is 
\beq
\label{decop}
\E [f_i^*g_i^*] - \E [f_i g_i] = \E[ g_i (f_i^* -  f_i )] + \E[ f_i(g_i^* - g_i )] + \E[(f_i^* - f_i )(g_i^* - g_i )].
\eeq
When $f(\cdot)$ or $g(\cdot)$ is mis-specified, we see from (\ref{decop}) that $\E [f_ig_i]$ is a biased approximation of the  parameter of interest $\E [f_i^*g_i^*]$. To reduce the bias on the right-hand side of (\ref{decop}), we use the empirical residuals $\hat{\eps}_i$ to estimate $f^*_i-f_i$, and $\hat{v}_i$ to estimate $g^*_i-g_i$.

We propose the following estimator of $I^*$,   
\begin{align}
	\label{eq-Ihat}
	& \Iss  = \sum_{i \in \I_y \cup \I_z} \frac{ \hfx \hgx }{N} + \sum_{i \in \I_z} \frac{ \hat v_i \hfx }{n_z} + \sum_{i \in \I_y} \frac{ \hat{\varepsilon}_i  \hgx }{n_y} - \hat{\mu}_{f} \hat{\mu}_g,
\end{align}
where $\hat{\mu}_{f}  = N^{-1}\sum_{i \in \I_y \cup \I_z} { \hfx } + n^{-1}_y \sum_{i \in \I_y} { \hat{\varepsilon}_i }$ and  $\hat{\mu}_{g} = N^{-1}\sum_{i \in \I_y \cup \I_z} { \hgx } + n^{-1}_z\sum_{i \in \I_z} { \hat{v}_i }$ are estimates of $\E[f_i^{*}]$ and $\E[g_i^*]$, respectively. The sum of the first three terms in $\Iss$ is the estimate of $\E[ f_i^{*} g_i^{*}]$. 
The imputation step makes the first term in $\Iss$ an average of $N$ random variables. In general, $N = n_y + n_z - |\cI_y \cap \cI_z|$.  When two traits are collected from the same group of individuals, $\I_y = \I_z$ and $N = n_y = n_z$. When two traits are collected from different groups of individuals, $\I_y \cap \I_z = \emptyset $ and $N = n_y + n_z$. With two traits collected separately, the imputation step is essential because there are no matched pairs, $(f_i, g_i)$ and $(f_i,  v_i)$ or $(g_i, \varepsilon_i),$ available from the data.

\ignore{{\color{magenta}
The proposed estimator $\Iss$ in (\ref{eq-Ihat}) only uses the predicted trait values and residuals. We consider the $\ell_1$-penalized regression in (\ref{eq:def-hat-beta-gamma}) and (\ref{eq:def-hat-beta-gamma2}) for its convenience in prediction and estimation, and its  theoretical guarantees under the sparsity assumptions.  It is however,  possible to apply other penalties or more flexible machine learning methods in the first step. Such machine learning approaches can be theoretically justified if at least one model provides accurate predictions as studied in the double machine learning literature \citep{chernozhukov2018double}.}
}

\subsection{Constructing confidence interval for genetic covariance}
\label{sec2-2}
To construct the confidence interval for $I^*$, we are left to devise a variance estimator for $\widehat I$. However, after considering potential model mis-specification, the analytical expression of variance could be tedious to calculate, if not impossible. We consider an estimate based on the empirical values rather than the limiting distribution. Denote the centered estimated values as 
 \(  f^c(x_i \hat\beta) = \hfx - \hat{\mu}_{f}\) and \( g^c(x_i \hat\gamma) =  \hgx  - \hat{\mu}_{g}.\) Let $\bI(\cdot)$ represent the indicator function and we define the empirical value $\widehat{\Delta}_i$ as 
\beq
\label{eq:delta}
 \widehat{\Delta}_i = \frac{ f^c({x_i} \hat \beta) g^c(x_i \hat\gamma)}{N} + \frac{ \hat{\eps}_i  g^c( x_i \hat\gamma) }{n_y}\bI(i \in \I_y) +\frac{ \hat{v}_i f^c({x_i} \hat \beta)}{n_z} \bI(i \in \I_z). 
 \eeq
 We notice that $\widehat{I}$ can be rewritten as the sum of $\widehat{\Delta}_i$, i.e., $\widehat I = \sum_{i \in \I_y \cup \I_z} \widehat\Delta_i.$  Hence, a natural variance estimator for $\widehat I$ is \(\widehat \sigma^2 =  \sum_{i \in \I_y \cup \I_z} (\widehat{\Delta}_i - N^{-1}{\widehat I})^2\).
Our proposed $(1-\alpha)\times 100\%$ two-sided confidence interval for $I^*$ is defined  as
 \beq
 \label{CI}
 CI(\alpha) = [\widehat I - z_{1 - \alpha/2} \widehat \sigma , \widehat I + z_{1 - \alpha/2} \widehat \sigma ], 
 \eeq
 where $z_{1 - \alpha/2}$ is the upper $\alpha/2$ quantile of standard normal distribution.



We summarize our proposed procedure in Algorithm \ref{alg1}. The method can be extended for estimating the genetic covariance in case-control studies via bias correction of the intercepts and by using weighting to correct the over-sampling of cases (Section A of the Supplemental Materials).  The proposed method can also be used to estimate the heritability by setting $z_i = y_i$. With data from the same group of individuals, the proposed estimator for heritability generalizes \cite{cai2018semi} to nonlinear trait models. 

 \medskip
\begin{algorithm}[H]
 \SetKwInOut{Input}{Input}
\SetKwInOut{Output}{Output}
\SetAlgoLined
 \Input{$(y_i, x_i)$ for $i \in \cI^0_y$, $(z_i, x_i)$ for $i \in \cI^0_z$, $\lambda_\beta$,  $\lambda_\gamma$, confidence level $\alpha$. }
 \Output{$\widehat I$ and $CI(\alpha)$.}
\caption{Estimation and inference for genetic covariance}

1. Split $\cI^0_y$ into $\tilde \I_y$ and $\I_y$. Split $\cI^0_z$ into $ \tilde \I_z$ and $\I_z$. 

2.  Estimate the $\hat \beta$ and $\hat \gamma$ as in \eqref{eq:def-hat-beta-gamma} based on samples in $\tilde \I_y$ and $\tilde \I_z$.

3. Calculate $\hat \varepsilon_i = y_i - \hsfx$ for $i \in \I_y$ , $\hat v_i = z_i - \hsgx $ for $i \in \I_z$. Estimate the genetic covariance $\Iss$ as in \eqref{eq-Ihat}.

4. Calculate $\hat \Delta_i$ via \eqref{eq:delta} and empirical variance estimator $\widehat \sigma^2$. Construct $CI(\alpha)$ as in \eqref{CI}.
\label{alg1}
\end{algorithm}

\subsection{Inference for the narrow-sense genetic covariance under linear working models}
\label{sec2-5}
If the narrow-sense genetic covariance $\beta^{\intercal}\Sig\gamma$ is used,  its estimation and inference can be achieved by choosing the working models $\hfx = \hat \beta_0 + x_i \hat \beta $ and $ \hgx = \hat \gamma_0 + x_i \hat \gamma$. In this case, $\Iss$ can be viewed as a de-biased estimator of $\beta \tr \Sigma \gamma$. Since the linear working models may be mis-specified, the proposed empirical variance estimator $\widehat{\sig}^2$ is crucial. This is different from the setting of correctly specified linear models, where inference methods can be derived based on the limiting distribution \citep{janson2017eigenprism,cai2018semi}.

\section{Asymptotic normality and robustness}
\label{sec:theory}

 Let $s_\beta = |\supp(\beta)| + \bI(\beta_0  \neq 0 ) $ and  $s_\gamma = |\supp({\gamma})| + \bI(\gamma_0  \neq 0)$ denote the sparsity of coefficients defined in \eqref{working-model}. 
We state following assumptions for theoretical analysis.
\begin{Assumption}
\label{asm: assumption}
 \label{C1:var-cond} The observed variants $x_i$ are $i.i.d.$ samples with mean zero, $\E(x_i \tr x_i) = \Sigma$, and bounded sub-Gaussian norm. The eigenvalues of $\Sig$ satisfy $ C \leq  \lambda_{\min} (\Sigma) \leq  \lambda_{\max}(\Sigma) \leq C^{-1}<\infty.$
\end{Assumption}

\begin{Assumption}
\label{asm: assumption-y}
 \label{C2:var-cond} The traits $y_i$ and $z_i$ are sub-Gaussian random variables with the bounded sub-Gaussian norm. The error terms $\eps^*_i$ and $v^*_i$ are  sub-Gaussian random variables with mean zero and the bounded sub-Gaussian norm. Besides, $ \min \{\Var(\eps^*_i) , \Var(v^*_i) \} \geq c > 0 $. 
\end{Assumption}

  \begin{Assumption}
\label{C2:GLM-condition} The functions $F(t)$ and $G(t)$ are twice differentiable with $f(t) = F^{\prime}(t)$ and $g(t) = G^\prime(t)$.  The derivatives of $f(t)$ and $g(t)$ are uniformly bounded, i.e., $|f^\prime(\cdot)| \leq C<\infty $ and $ |g^\prime(\cdot)| \leq C<\infty.$ Moreover, 
 $f^\prime(t)$ and $g^\prime(t)$ satisfy the Lipschitz condition for a positive constant $L > 0,$
 \[
\max_{i \in \cI^0_y \cup \cI^0_z} |f^\prime(\beta_0 + x_i \beta) -  f^\prime(\beta_0 + x_i \beta + \theta)| \leq  L |\theta|, ~~ \max_{i \in \cI^0_y \cup \cI^0_z} |g^\prime( \gamma_0 + x_i \gamma ) -  g^\prime(\gamma_0 + x_i \gamma + \theta)| \leq  L |\theta|.
  \]
\end{Assumption}


 Our  method allows the collected individuals to have missing outcomes. 
	Let $M_i$ be the indicator of missingness. 
	Assumption 1 and Model (1) together imply that the distribution of genotypes does not depend on the missing status, 
$P(x_i | M_i = 1) = P(x_i | M_i = 0)$ and the conditional mean function of the outcome is the same regardless of the missing status, $\E[y_i | x_i, M_i = 1] = \E[y_i | x_i, M_i = 0] $ and $\E[z_i | x_i, M_i = 1] = \E[z_i | x_i, M_i = 0].$ Both conditions are readily satisfied in most genetic studies. 
The boundedness assumption on $f'(t)$ and  $g'(t)$ has  been used in  \citet{negahban2012unified} and \citet{loh2015regularized}. The Lipschitz continuity for $f'(t)$ and $g'(t)$ is also  required in \cite{van2014asymptotically} for inference in the high-dimensional GLMs. 
Simple calculations show that the standard linear model, logistic model, and multinomial logistic model satisfy Assumption \ref{C2:GLM-condition}. Besides, in the following,  we require  $\min\left\{ \Var( f^*(x_i) ) ,  \Var( g^*(x_i) ) \right\} \geq  c > 0$ to ensure the variations of two traits are sufficiently explained by the variants $x_i$.  This condition is the foundation of further discussion on genetic covariance. 

\subsection{Theoretical properties under correct model specification}
\label{sec:correct}
Let $I_n$ denote a counterpart of $\Iss$ based on the population parameters $\beta$ and $\gam$. That is,
 \begin{equation}
 \label{I-tilde}
  I_n =  \sum_{i \in \I_y \cup \I_z} \frac{ f^c(\beta_0 + x_i \beta) g^c(\gamma_0 + x_i \gamma) }{N} + \sum_{i \in \I_z} \frac{ v_i f^c(\beta_0 + x_i \beta)  }{n_z} + \sum_{i \in \I_y} \frac{  \varepsilon_i g^c(\gamma_0 + x_i \gamma) }{n_y},
 \end{equation}
where $f^c(\beta_0 + x_i \beta) = \fx - \mu_f$ and  $g^c(\gamma_0 +x_i \gamma) = \gx - \mu_g$. In fact, the variance of $ I_n$ is same as the variance of $\Iss$.  Let $\nmax = \max\{n_y, n_z\}$.

In Theorem \ref{thm1}, we first present  the convergence rate of $\Iss$ when both models are correctly specified. Tuning parameters are chosen as $\lambda_\beta = c_1 \sqrt{\log p/n_y} $ and $ \lambda_\gamma = c_2 \sqrt{\log p/n_z}$, where $c_1$ and $c_2$ are large enough positive constants.
\begin{Theorem}
\label{thm1}
\label{thm:asym-normality}
Suppose both models are correctly specified, i.e., $f^*(x_i) = \fx $ and $g^*(x_i) = \gx$. 
Assume Assumptions \ref{C1:var-cond} - \ref{C2:GLM-condition} and $\max \{s_\beta \log p/n_y  , s_\gamma \log p/n_z \} = o(1)$. Then  with probability at least $1 - p^{-c_0} - e^{-cn} -  Ce^{-t^2}$,
\begin{align}
\label{eq:convergence-rate-general}
|\Iss - I^*| & \lesssim  \left(  n^{-1/2}_z + n^{-1/2}_y  \right)t + R_n \quad \text{where} \quad R_n = \sqrt{s_{\beta} s_{\gamma}} \frac{\log p}{\sqrt{n_z n_y}} + o(n_y^{-1/2} + n_z^{-1/2} ) .
\end{align}
If $ \sqrt{s_\beta s_\gamma}  \log p = o\left({\sqrt{\nmax}} \right) $, 
then we have \( (\Iss - I^*)/\sqrt{ \Var(  I_n) } \cid N(0, 1). \)
 
\end{Theorem}

Theorem \ref{thm1} implies that  under the mild condition  $\max \left\{s_\beta \log p/n_y, s_\gamma \log p/n_z \right\} = o(1)$,  the consistency of $\Iss$ holds. The estimation bias $R_n$ is determined by the estimation accuracy of $\hat \beta$ and $\hat \gamma$. For asymptotic normality, the sparsity condition $ \sqrt{s_\beta s_\gamma}\log p = o\left({\sqrt{\nmax}}\right) $ is needed.  
If $s_{\beta}\asymp s_{\gam}$ and $n_y\asymp n_z$, then this condition reduces to $s_{\beta}\log p=o(\sqrt{n_y})$ and $s_{\gam}\log p=o(\sqrt{n_z})$, which are the so-called ultra-sparse conditions. 
If $s_{\beta}\gg s_{\gamma}$ and $n_y\gg n_z$, then a sufficient condition is $s_{\beta}\log p=O(n_y/\sqrt{n_z})$ and $s_{\gam}\log p=o(\sqrt{n_z})$. That is, if the relatively small sparsity $s_{\gam}$ is in the ultra-sparse regime, then the relatively large sparsity $s_{\beta}$ can be beyond the ultra-sparse regime. The asymptotic variance $\Var(I_n)$ is of order $1/\min\{n_y,n_z\}$ as implied by (\ref{eq:convergence-rate-general}). Under the same conditions, we show the consistency of the variance estimator $\widehat{\sig}^2$  and the asymptotic validity of the confidence interval in Theorem S1 in the Supplemental Materials.

To illustrate the benefits of sample splitting, we provide a theoretical analysis of the full-sample estimator $\widehat{I}_{\text{full}}$,  which is computed with Algorithm \ref{alg1} with $ (\cI_y, \cI_z) = (\tilde \cI_y, \tilde \cI_z)=(\cI^0_y,\cI^0_z)$.
\begin{Theorem}
	\label{thm1-full}
	Suppose $n_y \asymp n_z \asymp n$ and $(\fx, \gx, v_i, \eps_i )$ have bounded values. When both models are correctly specified, i.e., $f^*(x_i) = \fx $ and $g^*(x_i) = \gx$, then under Assumptions \ref{C1:var-cond} - \ref{C2:GLM-condition} and condition $\max \{s_\beta \log p/n_y  , s_\gamma \log p/n_z \} = o(1)$, for the full-sample estimator, it holds  with probability at least $1 - p^{-c_0} - e^{-cn} -  Ce^{-t^2}$ that,
	\begin{align}
		\label{eq:convergence-rate-full}
		|\widehat{I}_{\textup{full}} - I^*| & \lesssim {  \frac{1}{\sqrt{n}} t } + R_n   ~\text{ where }~ R_n =   \frac{ \max(s_\beta, s_\gamma)  \log p }{n}.
	\end{align}
If $\max \{s_\beta, s_\gamma \} {\log p} = o\left({\sqrt{n}} \right)$, then $ R_n/\sqrt{ \Var(  I_n) } = o(1) $ and $(\widehat{I}_{\textup{full}} - I^*)/ \sqrt{ \Var( I_n) } \cid N(0, 1).  $
\end{Theorem}
Theorem \ref{thm1-full} considers the setting where the sample sizes for traits $y$ and $z$ are asymptotically balanced.
In view of \eqref{eq:convergence-rate-full}, the estimation consistency of $\widehat{I}_{\textup{full}}$ is guaranteed when  $\max\{s_{\beta}, s_{\gamma} \} \log p =o(n)$.  However, a stronger condition $\max \{s_\beta, s_\gamma \} \log p = o\left(\sqrt{n}\right)$ is needed for asymptotic normality. Comparing with the sparsity conditions in Theorem \ref{thm1} with $n_y\asymp n_z \asymp n$, two sets of conditions are equivalent if $s_{\beta}\asymp s_{\gam}$. In the unbalanced sparsity regime, $\Iss$ has smaller bias and requires milder conditions for its asymptotic normality. Specifically, if $ s_{\beta} \gg s_{\gamma} \asymp 1$, then $\widehat{I}_{\textup{full}}$ requires $ s_\beta \log p =o(\sqrt{n} ) $ and $\Iss$ requires  $s_\beta (\log p)^2 =o(n)$ for asymptotic normality.  The improvement of the split-sample estimator  in the unbalanced sparsity settings is also supported by the simulation results.

%
We conclude  this section by emphasizing the importance of imputing the trait values for samples with the missing outcomes. In our procedure, we use $\hfx$ as an approximate for the unobserved $y_i$ for $i\in \I_z\setminus \I_y$, and using $\hgx$ as an approximate for  the unobserved $z_i$ for $i\in \I_y\setminus \I_z$. 
In the second step of our estimation,  if we  only uses overlapping samples $\cI_o = \cI_y \cap \cI_z$ without imputing the missing traits,  
when  $n_y \asymp n_z \asymp n$ and $n_o = |\cI_o|$, the estimation error  is $1/ \sqrt{n_o}+ \sqrt{s_\beta s_\gamma  }\log p/ n $ with sample splitting and  is  $1/ \sqrt{n_o}+ \max(s_\beta, s_\gamma )\log p/ n $
without sample splitting.  We see that  using only the overlapping samples with both traits observed  leads to a  larger variance when $n \gg n_o.$ More detailed discussion can be found in Supplemental Materials.  

\subsection{Robustness to model mis-specification}
\label{sec:robust}
In practice, the estimates may be biased if the assumed working models are poor approximations of $f^*(x)$ and $g^*(x)$. 
This subsection shows the robustness of the proposed method when the working models are possibly mis-specified. 
Recall that $I_n$ in (\ref{I-tilde}) is the counterpart of $\Iss$ based on the population parameters $\beta$ and $\gam$, and the expectation of $I_n$ is
\begin{equation}
 \label{working-linear-model}
 \E [I_n]= \E\left[ f^c(\beta_0 + x_i \beta) g^c(\gamma_0 +x_i \gamma) + \eps_i g^c(\gamma_0 +x_i \gamma) + v_i f^c(\beta_0 + x_i \beta)  \right].
 \end{equation}
We note $\E [I_n]$ is different from the true parameter $I^*$ in general when $f(\cdot)$ or $g(\cdot)$ is mis-specified. 
The following proposition reveals the doubly robustness of $\E [I_n]$ for approximating $I^*$.
 
\begin{Proposition}
\label{prop:I-equiv-1}
Under Assumption \ref{C1:var-cond}, for $\E [I_n]$ defined in (\ref{working-linear-model}) and true parameter $I^*$ defined in (\ref{def:true-para}),  it holds that 
\(
 I^* - \E [I_n] = \E\left[\left( \gx - g^{*}(x_i)\right) \left( \fx - f^{*}(x_i)\right)\right]. \)
If at least one model is correctly specified, i.e., $f^*(x_i) = \fx $ or $g^*(x_i) = \gx,$ then 
\(
\E [I_n]  = I^*.
\)
\end{Proposition}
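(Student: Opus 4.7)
The plan is to compute $\E[I_n]$ term by term, then take the difference with $I^*$ and reorganize the algebra into the product form claimed on the right-hand side.

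First, I would observe that since the summands in each of the three sums defining $I_n$ are identically distributed, $\E[I_n]$ reduces to the expectation of a single representative term:
\[
\E[I_n] = \E\bigl[f^c(\beta_0 + x_i\beta)\,g^c(\gamma_0 + x_i\gamma)\bigr] + \E[\varepsilon_i\,g^c(\gamma_0 + x_i\gamma)] + \E[v_i\,f^c(\beta_0 + x_i\beta)].
\]
I would note that $\mu_f := \E[f(\beta_0+x_i\beta)] = \E[f^*(x_i)]$ (and similarly for $\mu_g$), because including the intercepts in the working models forces $\E[\varepsilon_i]=\E[v_i]=0$, so $\E[y_i]$ equals both $\E[f(\beta_0+x_i\beta)]$ and $\E[f^*(x_i)]$. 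Consequently the centered-term products simplify: $\E[f^c g^c] = \E[fg] - \mu_f\mu_g$, and $\E[\varepsilon_i g^c] = \E[\varepsilon_i g]$, $\E[v_i f^c] = \E[v_i f]$ since $\E[\varepsilon_i]=\E[v_i]=0$.

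Next I would handle the cross terms $\E[\varepsilon_i g(\gamma_0+x_i\gamma)]$ and $\E[v_i f(\beta_0+x_i\beta)]$ using the link between working and true residuals. Writing $\varepsilon_i = y_i - f(\beta_0+x_i\beta) = \varepsilon_i^* + f^*(x_i) - f(\beta_0+x_i\beta)$ and invoking $\E[\varepsilon_i^*\mid x_i]=0$ (Assumption on the true model), I get
\[
\E[\varepsilon_i\, g(\gamma_0+x_i\gamma)] = \E\bigl[(f^*(x_i) - f(\beta_0+x_i\beta))\, g(\gamma_0+x_i\gamma)\bigr],
\]
and by symmetry an analogous identity for $\E[v_i f(\beta_0+x_i\beta)]$. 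Substituting these back yields
\[
\E[I_n] = \E[f^*(x_i)\,g(\gamma_0+x_i\gamma)] + \E[g^*(x_i)\,f(\beta_0+x_i\beta)] - \E[f(\beta_0+x_i\beta)\,g(\gamma_0+x_i\gamma)] - \mu_f\mu_g.
\]

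Finally I would subtract from $I^* = \E[f^*(x_i)g^*(x_i)] - \mu_f\mu_g$; the $\mu_f\mu_g$ terms cancel and the remaining four expectations assemble into $\E[(f^*(x_i)-f(\beta_0+x_i\beta))(g^*(x_i)-g(\gamma_0+x_i\gamma))]$, which is the claimed expression. The doubly-robust corollary is then immediate: if either factor inside the product is identically zero (i.e.\ either $f^*=f(\beta_0+\cdot\,\beta)$ or $g^*=g(\gamma_0+\cdot\,\gamma)$ almost surely in $x_i$), then $I^*=\E[I_n]$.

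This proof is essentially algebraic bookkeeping; there is no serious obstacle. The only subtle point worth double-checking is the equality $\mu_f = \E[f^*(x_i)]$, which relies crucially on the intercept being included in the working model (so that $\E[\varepsilon_i]=0$); without intercepts the cancellation used to collapse the centered products into the clean product form would fail and an additional bias term would survive.
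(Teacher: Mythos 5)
Your proof is correct and follows essentially the same route as the paper's: substitute the residual decompositions $\eps_i=\eps_i^*+f^*(x_i)-f(\beta_0+x_i\beta)$ and $v_i=v_i^*+g^*(x_i)-g(\gamma_0+x_i\gamma)$, use $\E[\eps_i^*\mid x_i]=\E[v_i^*\mid x_i]=0$ together with the intercept-induced conditions $\E[\eps_i]=\E[v_i]=0$, and collect terms into $\E[(f^*-f)(g^*-g)]$. Your explicit verification that $\mu_f=\E[f^*(x_i)]$ and $\mu_g=\E[g^*(x_i)]$ is a point the paper uses implicitly (via $I^*=\E[(f^*(x_i)-\mu_f)(g^*(x_i)-\mu_g)]$), and flagging its dependence on the intercepts is a nice touch but not a different argument.
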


Proposition \ref{prop:I-equiv-1} shows that the true parameter $I^*$ and $\E[I_n]$ are equivalent if at least one conditional mean model is correctly specified. This property comes from the fact that we leverage the residuals to ``de-bias'' in $\E[I_n]$. That is, we use the last two terms on the right-hand side of (\ref{working-linear-model}) to correct the potential bias of the first term. This doubly robustness property preludes the robustness of $\Iss$ to model mis-specification under certain conditions, as shown in Theorem \ref{thm2}.
\begin{Theorem}
\label{thm2}
Assume that $f^*(x_i) = f(\beta_0 + x_i \beta)$ and $g^*(x_i) \neq g( \gamma_0 + x_i \gamma)$ . 
If Assumptions \ref{C1:var-cond} - \ref{C2:GLM-condition} and $\max \{s_\beta \log p/n_y  , s_\gamma \log p/n_z \} = o(1)$ hold, then
\(
|\Iss - I^{*}|  = o_p(1).
\)
\end{Theorem}
 The robustness property proved in Theorem \ref{thm2} also applies to the full-sample estimator $\widehat{I}_{\textup{full}}$ under the same conditions. The doubly robustness property also allows us to consider the double machine learning framework to replace the penalized GLMs when sparsity assumptions can be dubious as we will discuss in Section \ref{sec:discuss}. 
Although the population-level equivalence of $\E[I_n]$ and $I^*$ holds with one correctly specified model,
it does not imply that asymptotic normality of $\Iss$ directly hold in this case. This is because model mis-specification can invalidate some moment equations and the remaining bias, $\Iss - I_n$, can be larger under model mis-specification. For example, $\E[v_i|x_i]=0$ when $g^*(\cdot)=g(\cdot)$ but we only have $\E[v_ix_i]=0$ when $g^*(\cdot)\neq g(\cdot)$.

We next consider a special case of mis-specification, where one linear model is correctly specified, $f^*(x_i) = f(\beta_0 + x_i \beta) = \beta_0 + x_i \beta$,  and $g(\gam_0+x_i\gam)$ can be nonlinear and mis-specified. 
In this case, the linear form of the correctly specified $f^*(x_i)$ guarantees the moment condition $\E[v_i f^*(x_i)] = \E[v_i ( \beta_0 + x_i \beta)] = 0$, which allows us to derive the asymptotic normality.

\begin{Theorem}
\label{thm:I-equiv-2}
Assume $f^*(x_i) = f(\beta_0 + x_i \beta) = \beta_0 + x_i \beta$ and $g^*(x_i) \neq g( \gamma_0 + x_i \gamma)$.  Suppose that Assumptions \ref{C1:var-cond} - \ref{C2:GLM-condition}, $\max \{s_\beta \log p/n_y  , s_\gamma \log p/n_z \} = o(1)$, and $\min\left\{ \Var( f(x_i \beta) ) ,  \Var( g(x_i \gamma)) \right\} \geq c > 0$ hold. For the proposed split-sample estimator, if  
\begin{equation}
\label{eq:cond2}
  s_\beta {\log p} = o\left({ \nmax^{1/2} } \right)~~\text{and}~~  s_{\gamma} {\log p} = O\left( {n^{2/3}_{z}  } \right),
 \end{equation} 
 then \( {(\Iss -  I^*) }/{ \sqrt{ \Var(  I_n) } } \cid N(0, 1) \) and \( \lim_{n_y, n_z \rightarrow \infty} \mathbb{P}( I^* \in CI(\alpha)) = 1 - \alpha .\)
\end{Theorem}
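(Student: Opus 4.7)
The plan is to adapt the proof of Theorem~\ref{thm:asym-normality} (both models correct) to the present setting: the linearity of $f^*$ removes the Taylor remainder in $\hat f - f$, and Proposition~\ref{prop:I-equiv-1} supplies the population identity $\E[I_n]=I^*$, which together allow the analysis to proceed despite the loss of $\E[v_i\mid x_i]=0$ under mis-specified $g$. Decompose
\[
\widehat I - I^* = (I_n - \E[I_n]) + (\widehat I - I_n).
\]
The first piece is handled exactly as in Theorem~\ref{thm:asym-normality}: split $I_n - \E[I_n]$ into three sums indexed by $I_y\cup I_z$, $I_z$, and $I_y$, verify Lyapunov's condition using sub-Gaussianity and $\min\{\Var(f(x_i\beta)),\Var(g(x_i\gamma))\}\ge c>0$, and conclude asymptotic normality at rate $\sqrt{\Var(I_n)}\asymp n_{\min}^{-1/2}$. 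The remaining task is to show $\widehat I - I_n = o_p(n_{\max}^{-1/2})$.

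Write $\delta_i = (\hat\beta_0-\beta_0)+x_i(\hat\beta-\beta)$, which is \emph{exact} because $f$ is linear, and $\eta_i = g(\hat\gamma_0+x_i\hat\gamma)-g(\gamma_0+x_i\gamma)$. Starting from the centered identity $\widehat I = \sum\hat f_i^c\hat g_i^c/N + \sum_{I_z}\hat v_i\hat f_i^c/n_z + \sum_{I_y}\hat\eps_i\hat g_i^c/n_y$ and expanding around the population values, the bilinear pieces that are linear in $\delta$ or $\eta$ cancel between the first sum and the matched residual sum, leaving
\[
\widehat I - I_n = \tfrac{1}{n_z}\sum_{i\in I_z}v_i\delta_i + \tfrac{1}{n_y}\sum_{i\in I_y}\eps_i\eta_i - \tfrac{1}{N}\sum_{i\in I_y\cup I_z}\delta_i\eta_i + O_p(n_{\min}^{-1}).
\]
Sample splitting makes $(\hat\beta,\hat\gamma)$ independent of these observations. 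For the first sum, the Lasso-KKT moment $\E[v_ix_i]=0$ together with $\E[v_i]=0$ gives $\E[v_i\delta_i\mid\hat\beta]=0$, so Chebyshev with $\|\hat\beta-\beta\|_2^2=O_p(s_\beta\log p/n_y)$ yields $O_p\!\bigl(\sqrt{s_\beta\log p/(n_yn_z)}\bigr)=o(n_{\max}^{-1/2})$ under the stated $s_\beta$-sparsity. The second sum is symmetric: $\E[\eps_i\mid x_i]=0$ from the correct linear $f^*$ gives conditional mean zero, and Chebyshev produces $O_p\!\bigl(\sqrt{s_\gamma\log p/(n_yn_z)}\bigr)$, which is $o(n_{\max}^{-1/2})$ because $s_\gamma\log p/n_z=o(1)$ under~(\ref{eq:cond2}).

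The cross term $N^{-1}\sum\delta_i\eta_i$ is the main obstacle: it has no conditional mean-zero structure, so both its (conditional) expectation and its fluctuation must be bounded. Taylor expand $\eta_i=g'(\gamma_0+x_i\gamma)\Delta_i^\gamma + R_i$ with $|R_i|\le L(\Delta_i^\gamma)^2$ by Assumption~\ref{C2:GLM-condition}. The first-order piece contributes $(\hat\beta-\beta)\tr \E[x_ig'(\gamma_0+x_i\gamma)\Delta_i^\gamma]$ at the population level plus an empirical-process deviation; the Lasso-KKT inequalities $\|n_z^{-1}\sum_{i\in\tilde I_z}x_i(g(\hat\gamma_0+x_i\hat\gamma)-z_i)\|_\infty \le \lambda_\gamma$ combined with the population moment $\E[x_ig(\gamma_0+x_i\gamma)]=\E[x_iz_i]$ give $\|\E[x_i\eta_i]\|_\infty=O_p(\sqrt{\log p/n_z})+O(\|\hat\gamma-\gamma\|_2^2)$, and pairing with $\|\hat\beta-\beta\|_1=O_p(s_\beta\sqrt{\log p/n_y})$ via Hölder yields a rate absorbed by $s_\beta=o(n_{\max}^{1/2}/\log p)$. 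The Lipschitz remainder $N^{-1}\sum\delta_iR_i$ is where the weaker condition $s_\gamma=O(n_z^{2/3}/\log p)$ is consumed: controlling $|\delta_iR_i|\le L|\delta_i|(\Delta_i^\gamma)^2$ through the sub-Gaussian tail bound $\|\delta\|_\infty\lesssim\|\hat\beta-\beta\|_2\sqrt{\log n}$ and $N^{-1}\sum(\Delta_i^\gamma)^2=O_p(s_\gamma\log p/n_z)$ delivers the required $o(n_{\max}^{-1/2})$ bound precisely under~(\ref{eq:cond2}). Finally, the coverage claim reduces to $\widehat\sigma^2/\Var(I_n)\cip 1$, which follows along the lines of Theorem~\ref{thm:confidence-interval} because the leading-order variance is not perturbed by mis-specification, and the asymptotic coverage of $CI(\alpha)$ then follows by Slutsky.
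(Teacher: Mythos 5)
Your overall architecture coincides with the paper's: the CLT for $I_n-\E[I_n]$ (Lemma \ref{lem:clt}), the identity $\E[I_n]=I^*$ from Proposition \ref{prop:I-equiv-1}, the use of $\E[v_ix_i]=0$ and $\E[\eps_i\mid x_i]=0$ together with sample splitting to kill $n_z^{-1}\sum_{i\in\I_z}v_i\delta_i$ and $n_y^{-1}\sum_{i\in\I_y}\eps_i\eta_i$ (Lemma \ref{lem:sample-splitting-B2}), and the variance-estimator/coverage step. (A minor inaccuracy: the pieces linear in $\delta_i$ or $\eta_i$ do not cancel exactly unless $\I_y=\I_z$; they are the paper's terms \eqref{eq:Term-1}--\eqref{eq:Term-2} and are only shown to be negligible.) The genuine gap is in the cross term $N^{-1}\sum_i\delta_i\eta_i$, which is precisely where condition \eqref{eq:cond2} must do its work. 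You bound its conditional mean by H\"older as $\|\hat\beta-\beta\|_1\,\|\E[x_i\eta_i\mid\hat\gamma]\|_\infty$ with $\|\E[x_i\eta_i\mid\hat\gamma]\|_\infty=O_p(\sqrt{\log p/n_z})+O(\|\hat\gamma-\gamma\|_2^2)$. The second contribution yields a term of order $\|\hat\beta-\beta\|_1\|\hat\gamma-\gamma\|_2^2\asymp s_\beta\sqrt{\log p/n_y}\cdot s_\gamma\log p/n_z$, and this is \emph{not} $o\bigl(\sqrt{\Var(I_n)}\bigr)$ under \eqref{eq:cond2}: with $n_y\asymp n_z\asymp n$, $s_\beta\asymp \sqrt{n}/\log p$ and $s_\gamma\asymp n^{2/3}/\log p$, its ratio to $n^{-1/2}$ is of order $n^{1/6}/\sqrt{\log p}\to\infty$. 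So the claim that this piece is ``absorbed by $s_\beta=o(n_{\max}^{1/2}/\log p)$'' fails; your argument proves the theorem only under a strictly stronger joint condition on $s_\beta s_\gamma$, not under \eqref{eq:cond2}.

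The paper avoids this loss in Lemma \ref{lem:linear-true-model} by a different organization of the same idea: the cross term is rewritten as the identical bilinear form evaluated on the training sample $\tilde\I_z$ plus a sample-transfer error. On $\tilde\I_z$ the KKT inequality applies to the \emph{un-linearized} difference $g(x_i\gamma)-g(x_i\hat\gamma)$, so that piece is bounded by $\|\hat\beta-\beta\|_1\cdot\tfrac{3}{2}\lambda_\gamma\lesssim s_\beta\log p/\sqrt{n_yn_z}$ with no $s_\gamma$-dependent penalty; the transfer error is a bilinear/cubic form in $(\hat\beta-\beta,\hat\gamma-\gamma)$ controlled with $\ell_2$ norms times sparse-cone concentration factors (Lemmas \ref{lm:spect-bound}, \ref{inter-bound}, \ref{lem:3rd-order}), producing $s_\gamma s_\beta^{1/2}\sqrt{\log p/n_z}\cdot\log p/\sqrt{n_yn_z}$. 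In other words, the quadratic-in-$\hat\gamma$ cost is paired with $\|\hat\beta-\beta\|_2$ rather than $\|\hat\beta-\beta\|_1$, saving exactly the factor $\sqrt{s_\beta}$ that your H\"older pairing loses. Your separate treatment of the Lipschitz remainder $N^{-1}\sum_i\delta_iR_i$ does achieve the right $\sqrt{s_\beta}\,s_\gamma$ rate, but the same care is needed when converting the KKT bound from the training-sample score to the population (or inference-sample) score, and that is the step missing from your proposal.
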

Theorem \ref{thm:I-equiv-2} shows that the genetic covariance can be robustly inferred with at least one true linear trait. We prove the asymptotic normality of $\Iss$ under conditions where $s_{\gam}$ could be large due to the model mis-specification. The condition in \eqref{eq:cond2} assumes mild sparsity for the possibly mis-specified GLM, $s_{\gam}$, and a relatively stronger sparsity assumption for the correctly specified linear model, $s_{\beta}$. The condition on $s_{\beta}$ is no stronger than the ultra-sparse condition while the condition on $s_{\gam}$ is much weaker for statistical inference.  We comment that the condition in Theorem \ref{thm:asym-normality}  is also a sufficient condition, but the condition in \eqref{eq:cond2} is weaker with large $s_{\gam}$. 

\subsection{Narrow-sense genetic covariance}
\label{sec:narr-gecv}

The proposed method with linear working models $\fx = \beta_0 + x_i \beta$ and  $\gx = \gamma_0 + x_i \gamma$ can also be used to estimate and make inferences for the narrow-sense genetic covariance $\beta^{\intercal}\Sig\gam$, where $\beta$ and $\gam$ are defined via (\ref{def:def-beta}). Recall that $\E[I_n]$ in \eqref{working-linear-model} is the probabilistic limit of $\Iss$ when $\hat{\beta}$ and $\hat{\gam}$ converge. The following proposition shows the connection between $\Iss$ and $\beta^{\intercal}\Sig\gam$.

\begin{Proposition}
\label{prop:narrow-sense}
$\E[I_n]=\beta^{\intercal}\Sig\gam$ under Assumption \ref{C1:var-cond}.
\end{Proposition}

Proposition \ref{prop:narrow-sense} implies that $\Iss$ converges to $\beta^{\intercal}\Sig\gam$ as long as $\hat{\beta}$ and $\hat{\gam}$ converge, no matter the linear working models are correctly specified or not. This result is a direct consequence of $\E[v_i x_i] = 0$ and $\E[\eps_i x_i] = 0$. 
Next, Theorem \ref{thm:narrow-sense} shows that $\Iss$ provides asymptotically valid inference on the narrow-sense genetic covariance $\beta\tr\Sig\gam$, no matter the linear models are correctly specified or not. 

%

%

\begin{Theorem}
	\label{thm:narrow-sense}
Consider $\Iss$ with linear working models. Suppose that Assumptions \ref{C1:var-cond} - \ref{C2:GLM-condition},  $\min\left\{ \beta\tr \Sigma \beta,  \gamma\tr \Sigma \gamma \right\} \geq c > 0$ and $\max\{s_\beta \log p/n_y, s_\gamma \log p/n_z \} = o(1)$ are satisfied. If 
	\begin{equation}
  \label{cond-ns}
	\text{(i) }	s_{\beta}  = o\left(\frac{{ n^{1/2}_{\max} } }{\log p} \right) ~ \text{and} ~ s_{\gam} = O\left( \frac{ n^{3/4}_z }{\log p} \right) \text{, or (ii) } s_{\gamma}  = o\left(\frac{{ n^{1/2}_{\max} } }{\log p} \right) ~ \text{and} ~ s_{\beta} = O\left( \frac{ n^{3/4}_y }{\log p} \right),
	\end{equation}
	then $(\Iss - \beta^{\intercal}\Sig\gam )/\sqrt{ \Var(  I_n) }  \cid N(0, 1) $ and $\lim_{n_y, n_z \rightarrow \infty} \mathbb{P}\left( \beta^{\intercal}\Sig\gam \in CI(\alpha)\right) = 1 - \alpha.$
\end{Theorem}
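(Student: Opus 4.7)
The plan is to reduce the claim to a CLT-plus-remainder analysis. By Proposition~\ref{prop:narrow-sense}, under linear working models $\E[I_n]=\beta^{\intercal}\Sigma\gamma$, so it suffices to prove
$$\frac{\widehat I - \E[I_n]}{\sqrt{\Var(I_n)}}\cid N(0,1)\quad\text{and}\quad \widehat\sigma^2/\Var(I_n)\cip 1,$$
and conclude via Slutsky and the definition of $CI(\alpha)$. I would decompose $\widehat I-\E[I_n]=(I_n-\E[I_n])+(\widehat I-I_n)$. Thanks to sample splitting, $(\hat\beta,\hat\gamma)$ is independent of $\{x_i:i\in\I_y\cup\I_z\}$, and the three sums that define $I_n$ are sums of independent, centered, sub-Gaussian terms whose aggregate variance is at least $c/\nmax$ under $\min\{\beta^{\intercal}\Sigma\beta,\gamma^{\intercal}\Sigma\gamma\}\ge c$. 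A Lindeberg--Feller argument then yields $(I_n-\E[I_n])/\sqrt{\Var(I_n)}\cid N(0,1)$.

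The real work is to show $\widehat I-I_n=o_p(\nmax^{-1/2})$. Because $f$ and $g$ are linear, the pointwise expansions
$$\hfx-\fx=(\hat\beta_0-\beta_0)+x_i(\hat\beta-\beta),\qquad \hgx-\gx=(\hat\gamma_0-\gamma_0)+x_i(\hat\gamma-\gamma)$$
are exact, so $\widehat I-I_n$ splits into a handful of explicit terms: linear cross-terms like $n_z^{-1}\sum_{i\in\I_z}v_i x_i(\hat\beta-\beta)$ and $n_y^{-1}\sum_{i\in\I_y}\eps_i x_i(\hat\gamma-\gamma)$, a bilinear term asymptotic to $(\hat\beta-\beta)^{\intercal}\Sigma(\hat\gamma-\gamma)$, and empirical-centering corrections originating in $\hat\mu_f\hat\mu_g-\mu_f\mu_g$. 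The structural fact driving the proof is that for linear working models both moment identities $\E[x_iv_i]=0$ and $\E[x_i\eps_i]=0$ hold, \emph{regardless of whether $f^*$ or $g^*$ is actually linear}. Combined with sample splitting, each cross-term becomes a centered sub-exponential sum conditional on $(\hat\beta,\hat\gamma)$, and a Bernstein inequality bounds it by $\|\hat\beta-\beta\|_2/\sqrt{n_z}$ (resp.\ $\|\hat\gamma-\gamma\|_2/\sqrt{n_y}$); the bilinear term is concentrated around $(\hat\beta-\beta)^{\intercal}\Sigma(\hat\gamma-\gamma)$ by a Hanson--Wright argument with deviation $\|\hat\beta-\beta\|_2\|\hat\gamma-\gamma\|_2$.

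Plugging in the standard Lasso guarantees $\|\hat\beta-\beta\|_2\lesssim\sqrt{s_\beta\log p/n_y}$ and $\|\hat\gamma-\gamma\|_2\lesssim\sqrt{s_\gamma\log p/n_z}$, together with the corresponding $\ell_1$ rates, each remainder term is an explicit function of $(s_\beta,s_\gamma,n_y,n_z,p)$. One checks that condition (i) of \eqref{cond-ns} drives every such term to $o(\nmax^{-1/2})$; under (ii) the argument is symmetric with the roles of $(y,\beta)$ and $(z,\gamma)$ swapped. Consistency of $\widehat\sigma^2$ for $\Var(I_n)$ is then obtained by applying the same decomposition to $\widehat\sigma^2$ and reusing these bounds, mirroring the argument in Theorem~\ref{thm:confidence-interval}.

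\textbf{Main obstacle.} The delicate point is justifying the relaxed sparsity rate $s_\gamma=O(n_z^{3/4}/\log p)$, which is weaker than the $O(n_z^{2/3}/\log p)$ appearing in Theorem~\ref{thm:I-equiv-2}. This improvement has to come from the extra orthogonality enjoyed in the linear-linear setting: in Theorem~\ref{thm:I-equiv-2}, where $g$ is a possibly nonlinear mis-specified GLM, the term $n_y^{-1}\sum_{i\in\I_y}\eps_i[g(\hat\gamma_0+x_i\hat\gamma)-g(\gamma_0+x_i\gamma)]$ must be controlled through a Taylor expansion whose quadratic residual only vanishes at order $s_\gamma\log p/n_z$. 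Under the linear working model this same object equals $n_y^{-1}\sum_{i\in\I_y}\eps_i x_i(\hat\gamma-\gamma)$ exactly, and the identity $\E[\eps_i x_i]=0$ collapses it to the strictly faster order $\sqrt{s_\gamma\log p/(n_y n_z)}$. Carefully tracking this cancellation, together with its symmetric counterpart on the $\hat\beta$ side, and combining both with the Hanson--Wright bound for the bilinear term, is the main book-keeping hurdle that produces the $n_z^{3/4}$ (rather than $n_z^{2/3}$) sparsity threshold in \eqref{cond-ns}.
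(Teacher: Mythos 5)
Your overall architecture (Proposition~\ref{prop:narrow-sense} plus the decomposition $\widehat I-\E[I_n]=(I_n-\E[I_n])+(\widehat I-I_n)$, CLT for $I_n$, Bernstein-type bounds for the cross terms via $\E[x_i\eps_i]=\E[x_iv_i]=0$, and reuse of the variance-consistency argument) matches the paper. The gap is in the one term that actually forces the sparsity condition \eqref{cond-ns}: the product term \eqref{eq:Term-3}, i.e.\ the empirical bilinear form $\frac{1}{|\I|}\sum_{i\in\I}x_i(\hat\beta-\beta)\,x_i(\hat\gamma-\gamma)$. You propose to concentrate it around $(\hat\beta-\beta)\tr\Sigma(\hat\gamma-\gamma)$ via Hanson--Wright and then bound everything by $\|\hat\beta-\beta\|_2\|\hat\gamma-\gamma\|_2$. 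But Cauchy--Schwarz on the population bilinear form gives only
\begin{equation*}
\bigl|(\hat\beta-\beta)\tr\Sigma(\hat\gamma-\gamma)\bigr|\lesssim \|\hat\beta-\beta\|_2\|\hat\gamma-\gamma\|_2\lesssim \sqrt{s_\beta s_\gamma}\,\frac{\log p}{\sqrt{n_yn_z}},
\end{equation*}
which under condition (i) with $n_y\asymp n_z\asymp n$ and $s_\gamma\asymp n_z^{3/4}/\log p$, $s_\beta\asymp n^{1/2}/\log p$ is of order $n^{-3/8}$, not $o(\nmax^{-1/2})$. So the claim ``one checks that condition (i) drives every such term to $o(\nmax^{-1/2})$'' fails for exactly this term; your route only recovers the stronger condition $\sqrt{s_\beta s_\gamma}=o(\sqrt{\nmax}/\log p)$ of Theorem~\ref{thm:asym-normality}, not \eqref{cond-ns}.

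The paper's device, which your proposal is missing, is to re-center the empirical Gram matrix at the \emph{training} sample and exploit the Lasso KKT (dual feasibility) condition rather than Cauchy--Schwarz: writing the bilinear form as its analogue over $\tilde\I_z$ plus a Gram-difference term, the first piece is bounded by $\|\hat\beta-\beta\|_1\,\|\,|\tilde\I_z|^{-1}\sum_{i\in\tilde\I_z}x_i\tr(x_i\gamma-x_i\hat\gamma)\|_\infty\lesssim s_\beta\lambda_\beta\lambda_\gamma$ (note: $s_\beta$, not $\sqrt{s_\beta s_\gamma}$), and the Gram-difference is controlled over sparse cones (Lemmas~\ref{lm:spect-bound} and \ref{inter-bound}), contributing $\|\hat\beta-\beta\|_2\|\hat\gamma-\gamma\|_2\sqrt{s_{\max}\log p/n_z}$; by symmetry one may instead use the KKT condition for $\hat\beta$, which yields the alternative condition (ii). Relatedly, your diagnosis of where the $n_z^{3/4}$ (versus the $n_z^{2/3}$ of Theorem~\ref{thm:I-equiv-2}) comes from is misplaced: the cross terms $n_y^{-1}\sum\eps_i x_i(\hat\gamma-\gamma)$ and $n_z^{-1}\sum v_i x_i(\hat\beta-\beta)$ are already $o_p(\nmax^{-1/2})$ in both theorems; the improvement comes from the product term, where the linear--linear setting removes the Taylor/third-order remainder (Lemma~\ref{lem:3rd-order}, which is what imposes $(s\log p)^{3/2}/n_z=O(1)$, i.e.\ the $2/3$ exponent, in Lemma~\ref{lem:linear-true-model}) and lets the KKT argument apply on either coordinate.
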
 

Thanks to the simple linear form, for the larger sparsity, say, $s_\beta$,  we get a weak condition $s_{\beta} {\log p} = O({ n^{3/4}_{y} })$ for asymptotic normality. Comparing with the inference results for the general quadratic functional $I^*$, the inference for the bilinear form $\beta\tr\Sig\gam$ requires weaker conditions.



As a byproduct, Proposition \ref{prop:I-equiv-1} and Proposition \ref{prop:narrow-sense} together  imply  that the narrow-sense genetic covariance $\beta^{\intercal}\Sig\gam$ reflects the total genetic covariance when at least one trait is linear. 

\begin{Corollary}
\label{coroll:narrow-sense}
Under Assumption \ref{C1:var-cond}, for $\beta$ and $\gam$ defined via (\ref{def:def-beta}) with linear $f(\cdot)$ and $g(\cdot)$,  $I^* = \beta^{\intercal}\Sig\gam$ when at least one model is correctly specified. 
\end{Corollary}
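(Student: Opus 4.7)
The plan is to derive this corollary as a direct consequence of the two preceding propositions, since it essentially chains together their conclusions. First I would fix the linear working models $f(\beta_0 + x_i\beta) = \beta_0 + x_i\beta$ and $g(\gamma_0 + x_i\gamma) = \gamma_0 + x_i\gamma$, and note that under this choice, the coefficients $\beta$ and $\gamma$ defined through \eqref{def:def-beta} are precisely the population least-squares projections of $y_i$ and $z_i$ onto $x_i$ (with intercepts), so in particular $\E[\varepsilon_i x_i]=0$ and $\E[v_i x_i]=0$ automatically, regardless of whether $f^*$ or $g^*$ is truly linear.

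Next I would invoke Proposition 2 applied to these linear working models to get the population-level identity $\E[I_n] = \beta^{\intercal}\Sigma\gamma$. This takes care of one half of the identification. In parallel, I would invoke Proposition 1, which establishes the doubly robustness property
\[
I^* - \E[I_n] = \E\bigl[(g(\gamma_0 + x_i\gamma) - g^*(x_i))(f(\beta_0 + x_i\beta) - f^*(x_i))\bigr],
\]
and in particular yields $\E[I_n] = I^*$ whenever at least one of $f^*(x_i) = f(\beta_0 + x_i\beta)$ or $g^*(x_i) = g(\gamma_0 + x_i\gamma)$ holds. Here, ``correctly specified'' is understood with respect to the linear working model, i.e., at least one of $f^*$ or $g^*$ is linear (so that it agrees with the corresponding working model after projection).

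Chaining the two identities gives $I^* = \E[I_n] = \beta^{\intercal}\Sigma\gamma$ under the stated hypothesis, which is exactly the claim. Since Assumption 1 is inherited by both propositions, no additional conditions are needed. There is no genuine obstacle here: the only subtlety to flag explicitly in the write-up is the interpretation of ``correctly specified'' in the context of linear working models, so that the reader sees that the hypothesis of Proposition 1 aligns with the linear setting of Proposition 2. The proof will therefore be essentially a one-line combination, and I would present it as such without additional calculation.
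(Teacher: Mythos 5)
Your proposal is correct and matches the paper's own argument: the paper obtains Corollary \ref{coroll:narrow-sense} exactly by chaining Proposition \ref{prop:I-equiv-1} (which gives $\E[I_n]=I^*$ when at least one working model is correctly specified) with Proposition \ref{prop:narrow-sense} (which gives $\E[I_n]=\beta^{\intercal}\Sig\gam$ for linear working models). Your remark that ``correctly specified'' here means the true conditional mean is itself linear is the right reading of the hypothesis, and no further argument is needed.
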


\ignore{
\section{Small sample adjustments for linear models with overlapping samples}\label{Smallsample}

We instead provide  a modification of the confidence interval estimation for such a full-sample estimator, which leads to  improved coverage as shown in our simulations.

\subsection{Bias adjustment for overlapping samples}

For practical purpose, we present a modification of the confidence interval estimation when the  full-sample is used.   We quantify the size of bias $R$ based on the recent theoretical results on the prediction risk and residuals in the high dimensional linear model. 
First, under the bi-variate normal assumptions of error terms, we note \(\E[v_i x_i ( \hat \beta - \beta)] = \E[  \eps_i x_i  (\hat \beta - \beta) ] \Cov(\eps_i, v_i)/{\Cov(\eps_i, \eps_i)}.\)  
Further, we observe that $\E[2\eps_i (x_i \beta - x_i \hat \beta) ] =  \E[ \hat \eps^2_i ] - \sigma^2_{\varepsilon} - \E[(x_i \beta - x_i \hat \beta)^2].$ By the Theorem 7 in \cite{celentano2020lasso}, the term $\sigma^2_{\varepsilon} + \E[(x_i \beta - x_i \hat \beta)^2]$ is approximated by $\zeta_{\beta}^{-2} \E[ \hat \eps^2_i ]$ with $\zeta_\beta = 1 - s_\beta /n^{(0)}_y.$ 
Therefore, defining $\hat \zeta_\beta$ and  $\hat \zeta_\gamma$ similarly with $\hat s_\beta = |\supp{(\hat \beta) }|$ and $\hat s_\gamma = |\supp{(\hat \gamma) }|,$ we estimate the relevant expectations using
\begin{equation*}
	\widehat \E[\varepsilon_i x_i ( \hat \gamma - \gamma  )  +  v_i x_i ( \hat \beta - \beta)] =  (  \frac{ \hat\zeta_{\beta}^{-2}  +   \hat\zeta_{\gamma}^{-2 } }{2} - 1  ) \widehat \Cov( \eps_i, v_i ).
\end{equation*}
Exact quantification of $\E[(  x_i \hat \gamma -  x_i\gamma) ( x_i \hat \beta - x_i \beta)]$ is challenging. Based on the Stein Unbiased Risk Estimator (SURE) of prediction error in Lasso derived by \cite{tibshirani2012degrees}, we consider a simple plug-in estimation  as,
\[
\widehat \E[(  x_i \hat \gamma -  x_i\gamma) ( x_i \hat \beta - x_i \beta)] = \frac{\hat s_o}{ \sqrt{n^{(0)}_y n^{(0)}_z} - \hat s_o} \Cov(\hat \eps_i, \hat v_i)
\]
where $\hat s_o = \Tr( X_{\hat \gamma} ( X_{\hat \gamma} \tr X_{\hat \gamma})^{-1} X_{\hat \gamma}\tr  X_{\hat \beta} ( X_{\hat \beta} \tr X_{\hat \beta})^{-1} X_{\hat \beta}\tr )$ with  $X_{\hat \gamma}$ being the $N^{(0)} \times \hat s_\gamma $ sub-matrix with selected columns by the Lasso and  $X_{\hat \beta}$ being the $N^{(0)} \times \hat s_\beta$ sub-matrix matrix. Here $N^{(0)} = |\cI^{(0)}_y \cup \cI^{(0)}_z |$ and $N^{(0)} = n^{(0 )}_y = n^{(0)}_z $ for complete overlapping samples. More details on the derivation could be found in the Section C of the Supplements. Using the estimated expectations, we estimate the  bias $\widehat R$ using full overlapping samples as 
\begin{align*}
	\widehat R =  (    \frac{ \hat\zeta_{\beta}^{-2}  +   \hat\zeta_{\gamma}^{-2 } }{2} - 1 -  \frac{\hat s_o }{ \sqrt{n^{(0)}_y n^{(0)}_z} - \hat s_o}  ) \frac{ \sum_{i \in \cI^{(0)}_y \cap \cI^{(0)}_z }  \hat\epsilon_i  \hat v_i }{\sqrt{n^{(0)}_y n^{(0)}_z}}, \label{est-R} 
\end{align*} 
and propose the following bias-corrected estimation of the confidence interval, 	
\begin{align*}
	CI(\alpha) =  \begin{cases} [\widehat I - z_{1 - \alpha/2} \widehat \sigma - \widehat R , \widehat I + z_{1 - \alpha/2} \widehat \sigma ] & \text{ if } \widehat R > 0\\
		[\widehat I - z_{1 - \alpha/2} \widehat \sigma , \widehat I + z_{1 - \alpha/2} \widehat \sigma -\widehat R ] &  \text{ if } \widehat R < 0
	\end{cases}.
\end{align*}
The size of $\hat R$ is small for relatively sparse fitted coefficients $\hat \beta$ and $\hat \gamma$, or large sample sizes of $n^{(0)}_y$ and $n^{(0)}_z.$  On the other hand,  large $\hat R$ indicates  that models are less sparse and there is more uncertainty in the estimation.
}

\section{GWAS simulations and evaluation of the methods}
\label{sec:simulation}
In this section, we evaluate the numerical performance of the proposed methods. To simulate data that mimics real GWAS data, we use phenotypic values generated from the real genotypic data  of a GWAS of pediatric autoimmune diseases \citep{YunLi}. This dataset includes $10718$ subjects in the control group with a total of $475324$ SNPs genotyped on 22 autosomes. The potential causal loci is selected from $K = 13649$ genetic variants after LD-based pruning in \texttt{plink} software. In each experiment, we repeat the simulations $200$ times and in each repetition, we randomly select 8000 individuals to generate the traits, including the overlapping case where  8000  individuals have both two traits measured and the non-overlapping case where  4000 individuals have one of the two traits measured respectively.  

We use the \texttt{bigstatsr} package \citep{prive2018efficient} to fit the penalized regression for the large dataset. Tuning parameters are selected based on the 10-fold cross-validation. To evaluate the performance of the full-sample and the split-sample estimators, we split $90\%$ samples for  estimating the coefficients $(\hat \beta, \hat \gamma).$ Then the full-sample estimator is constructed from the same data  and the split-sample estimator is constructed from the remaining $10 \%$ samples.
The estimates from the random-effects variance component model using  GCTA \citep{lee2012estimation} are compared.

\subsection{Simulations with continuous traits}
\label{sec:LL}
We consider  continuous traits  generated from the following  models  
\begin{equation}
\label{eq: simulate double models}
y_i =  f^{*}(x_i) + \varepsilon^*_i  ~~ \text{and} ~~  z_i =  g^{*}(x_i)  + v^*_i. 
\end{equation} 
Let $\Var(y_i) = \Var(z_i) = 1$ and the heritability  $\Var(f^{*}(x_i)) =  \Var(g^{*}(x_i)) = 0.5.$ The error terms \(\varepsilon^{*}_i\) and \(v^{*}_i\)  are assumed to follow a bivariate normal distribution with $\E(\varepsilon^{*}_i) = \E(v^{*}_i) = 0$, $\Var(\varepsilon^{*}_i) = \Var(v^{*}_i) = 0.5$ and $\Cov(v^{*}_i, \varepsilon^{*}_i) = 0.2.$ 
To generate the traits with different genetic architectures, we randomly select causal SNP sets $C_\beta$ and $C_\gamma$ with size $|C_\beta| = \prop_\beta \times K$ and $|C_\gamma| =  \prop_\gamma \times K.$ The overlapping genetic architecture is determined by the overlapping set of  variants  $ C_o  =  C_\beta \cap  C_\gamma$ with its size of $\prop_o \times K.$ 
By specifying different proportions $(\prop_\beta, \prop_\gamma, \prop_o )$, various genetic architectures are considered, including 
both traits generated from sparse models or from polygenic models, and 
one trait generated from a sparse model while the other one generate from a polygenic model. We also consider that model when  both traits  are generated from polygenic models that include both dense weak effects and sparse strong effects, and  the genetic covariance is contributed by sparse and strong effects.

We consider the following  four  specific trait  models  $f^*(x_i)$ and $g^*(x_i)$: 

\noindent {\bf (a)}. True linear models $f^{*}(x_i) = \sum_{i \in C_\beta} x_i \beta_i$ and $g^{*}(x_i) = \sum_{i \in C_\gamma} x_i \gamma_i$ with non-zero coefficients generated from the normal distributions and  the coefficients are then re-scaled to guarantee $I^* = 0.2.$ Details can be found in  Section B of the Supplemental Materials.

\noindent {\bf (b)}. True linear models with the genetic effects distribution  dependent on the minor allele frequencies and its linkage disequilibrium (LD) measure based on the LDAK model \citep{speed2012improved}.  Specifically, for the $i$th variant,  we define the weight $h_i = (f_i - f^2_i)^{-0.75}/\ell_i,$ where $f_i$ is the effect  allele frequency and $\ell_i$ is the calculated LD score. The genetic effects are then specified as  $\tilde \beta_i = h_i \beta_i$ and $\tilde \gamma_i = h_i \gamma_i$ with $( \beta_i, \gamma_i)$ generated from normal coefficients. It implies that, at a given minor allele frequency (MAF), low-LD SNPs have larger effect sizes and at a given LD, SNPs with lower MAF have larger effect sizes.

\noindent  {\bf (c)}. Same model as (b) while $h_i = (f_i - f^2_i)^{0.75} \times \ell_i.$ It implies that, at a given minor allele frequency, high-LD SNPs have larger effect sizes and  at a given LD, SNPs with higher MAF have larger effect sizes.

\noindent {\bf (d)}.  The trait  $z_i$ is generated from the true linear model as  in (a) while the continuous trait $y_i$ is generated from the composite model  $ f^{*}(x_i) = \sum_{j \in C_\beta} ({x^{1/2}_{ij}} + x^3_{ij} ) \beta_j +  \sum_{ j \in C_\beta } x_{ij} x_{i j'} \beta_j \beta_{j'},$
    where $j'$ is the randomly selected index of the variant that  interacts with  the $j$th variant.

\ignore{
\item[(e)]  trait $z_i$ is generated from the true linear model while the continuous trait $y_i$ is generated from the  model  with dominant effects $$ f^*(x_i) = \sum_{j \in C_\beta } \mathbbm{1}( x_{ij} > 0) \beta_j.$$

  \item[(f)] Both models are generated from the dominant effects model defined in as in model (e). 
}

\begin{figure}
\resizebox{0.95\linewidth}{!}{
\begin{tabular}{cc}
  \includegraphics[height=0.3\textheight, width=0.5\textwidth]{./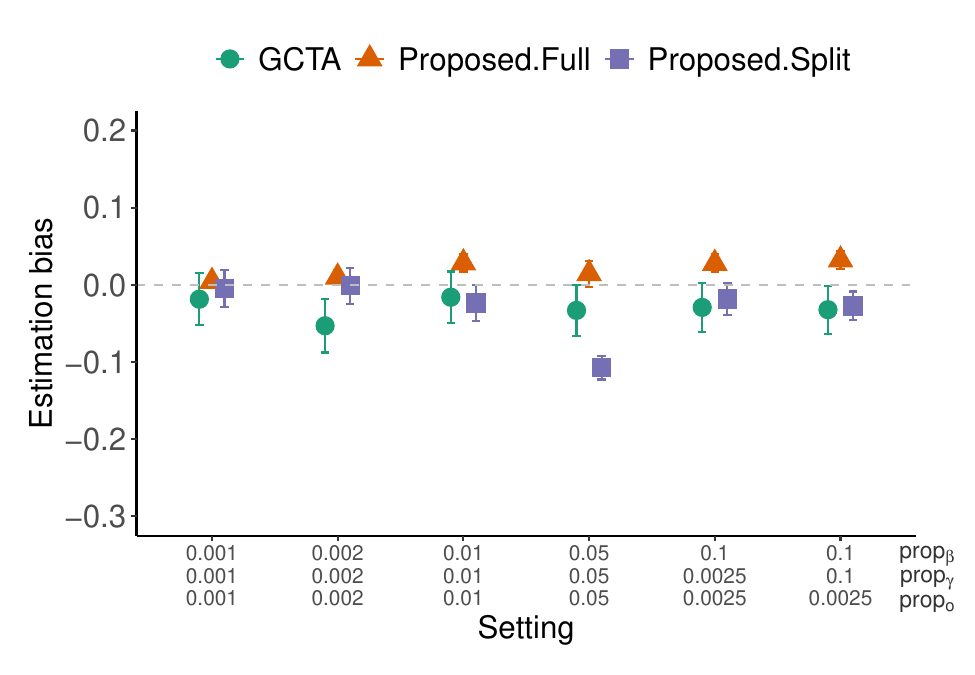} &  
  \includegraphics[height=0.30\textheight,width=0.5\textwidth]{./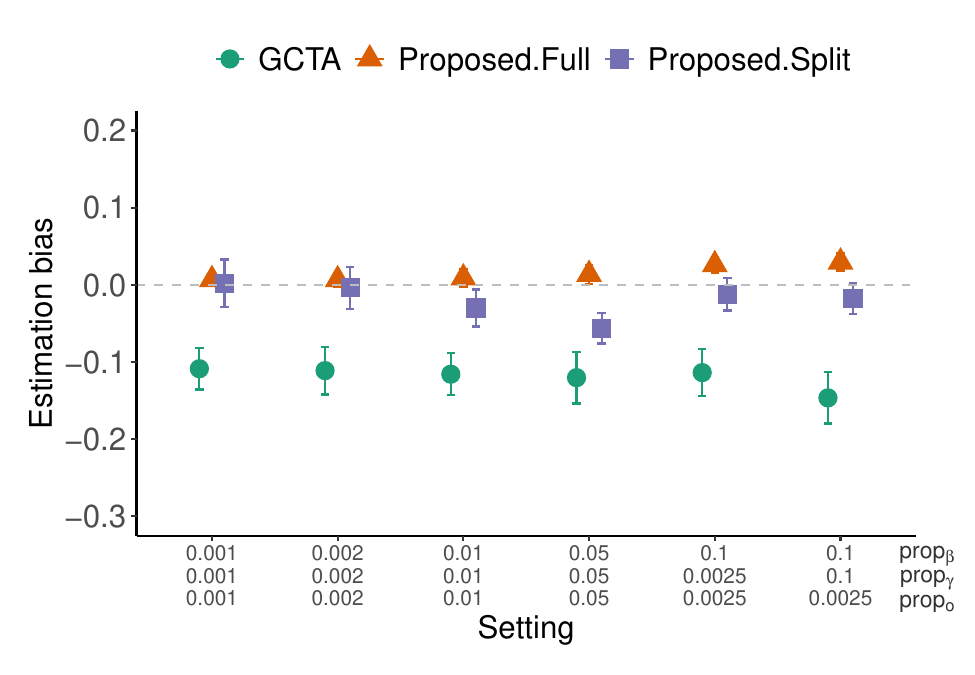} 
  \vspace{-0.1in}\\
  \vspace{-0.02in}
  (a) Normal coefficients, linear models & (b) LDAK 1, linear models \\
  \includegraphics[height=0.30\textheight, width=0.5\textwidth]{./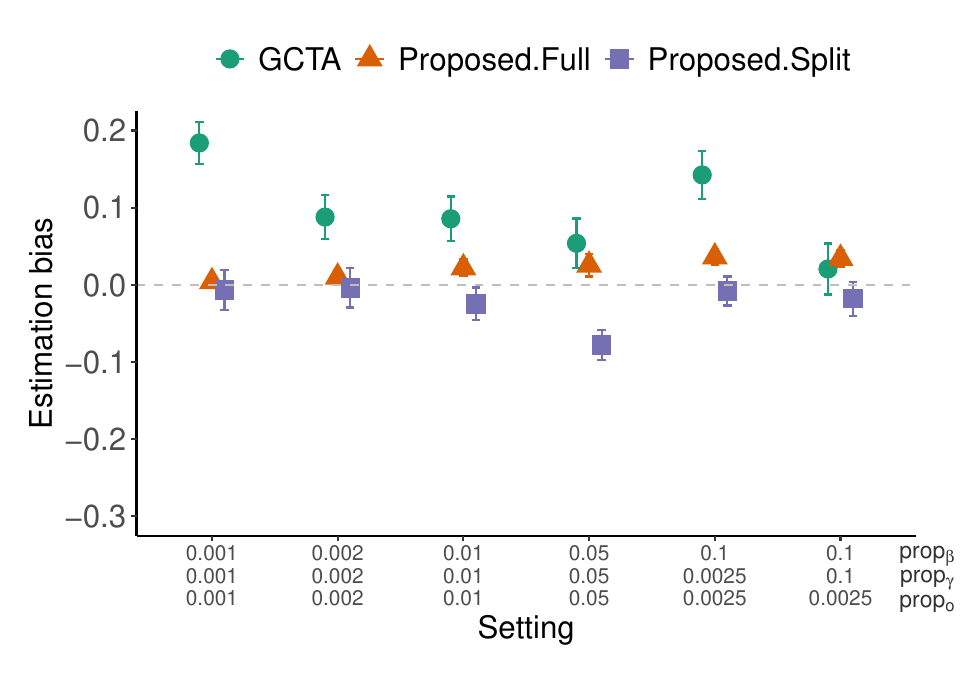} &  
  \includegraphics[height=0.30\textheight,width=0.5\textwidth]{./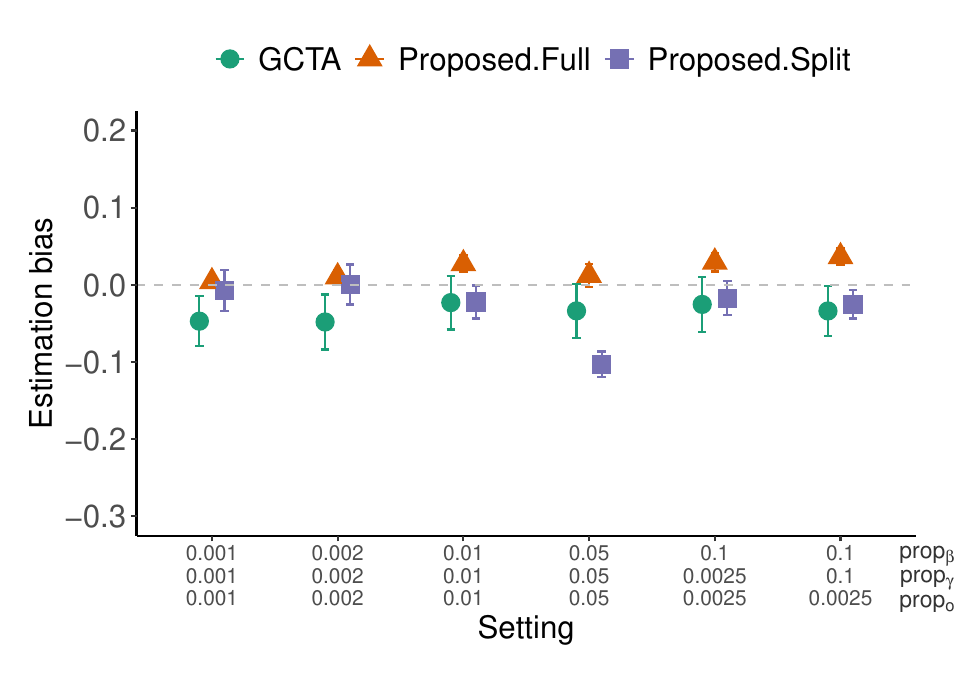} 
  \vspace{-0.1in}\\
  \vspace{-0.02in}
  (c) LDAK 2, linear models & (d) Single mis-specified composite model \\
  \ignore{
  \includegraphics[height=0.30\textheight,width=0.5\textwidth]{./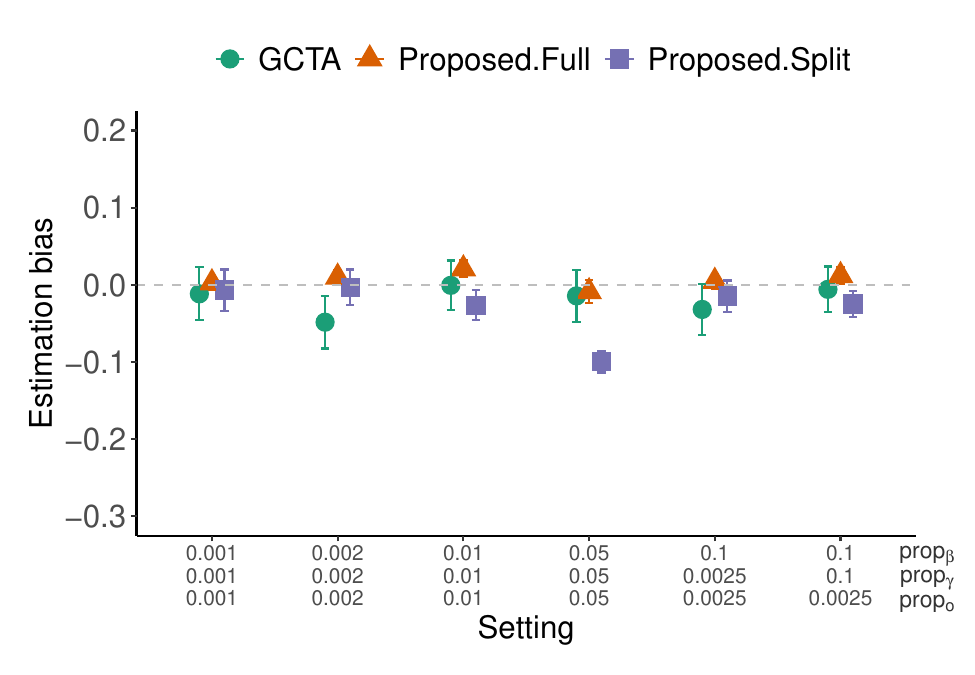} &  
  \includegraphics[height=0.30\textheight,width=0.5\textwidth]{./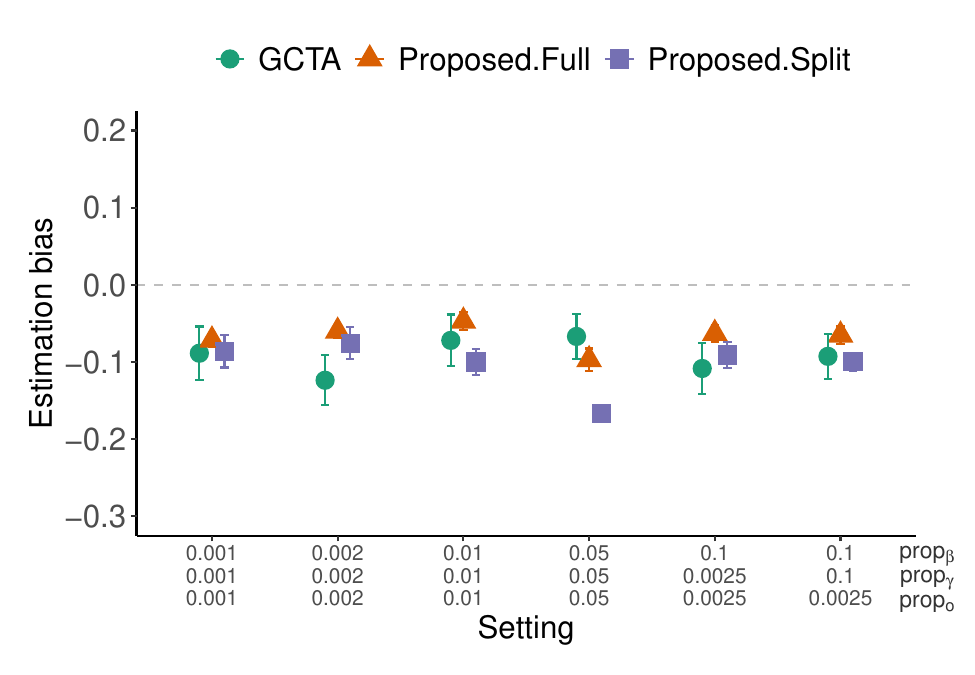} \\
  (e) Single mis-specified dominant model  & (f) Two mis-specified dominant models \\}
\end{tabular}
}
\caption{Comparison of estimation accuracy for four different genetic models  when both traits are measured on the same set of individuals ($N_y = N_z = 8000$).  Each point represents the mean of the average bias, with error bars representing the standard error on both sides. 
}
\label{fig:LL-over}
\end{figure}

\ignore{
\begin{figure}
\resizebox{\linewidth}{!}{
\begin{tabular}{cc}
  \includegraphics[height=0.3\textheight, width=0.5\textwidth]{./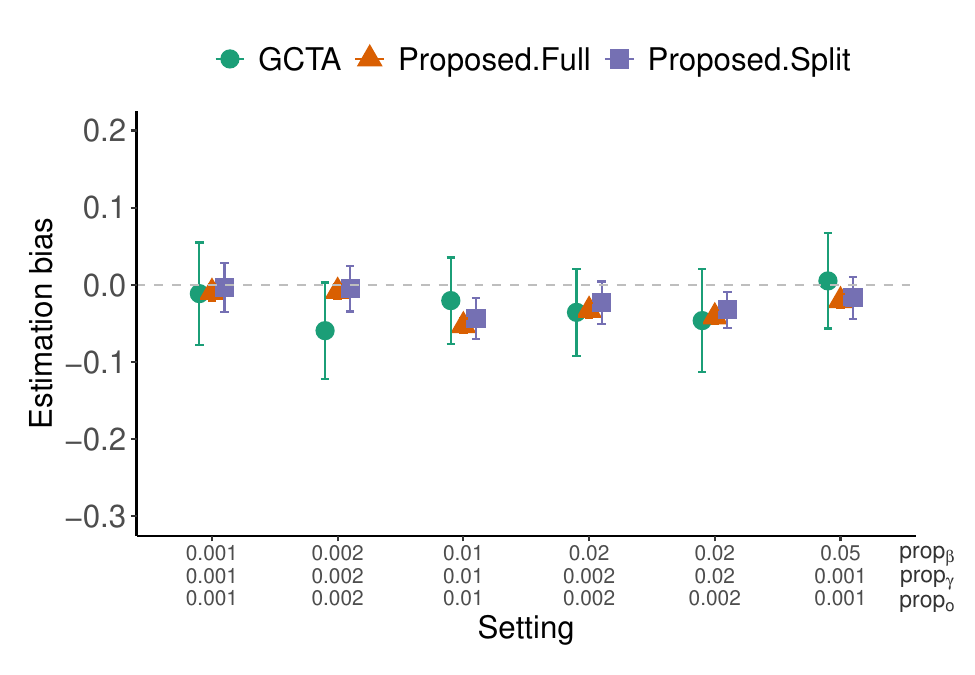} &  
  \includegraphics[height=0.30\textheight,width=0.5\textwidth]{./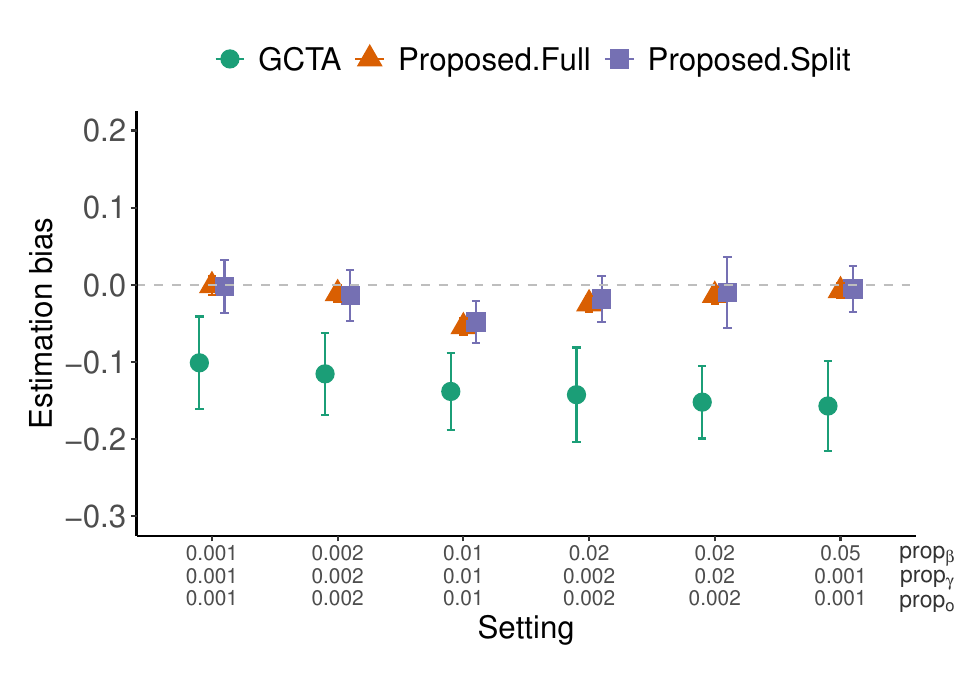} \\
  (a) Normal coefficients, linear models & (b) LDAK 1, linear models \\
  \includegraphics[height=0.30\textheight, width=0.5\textwidth]{./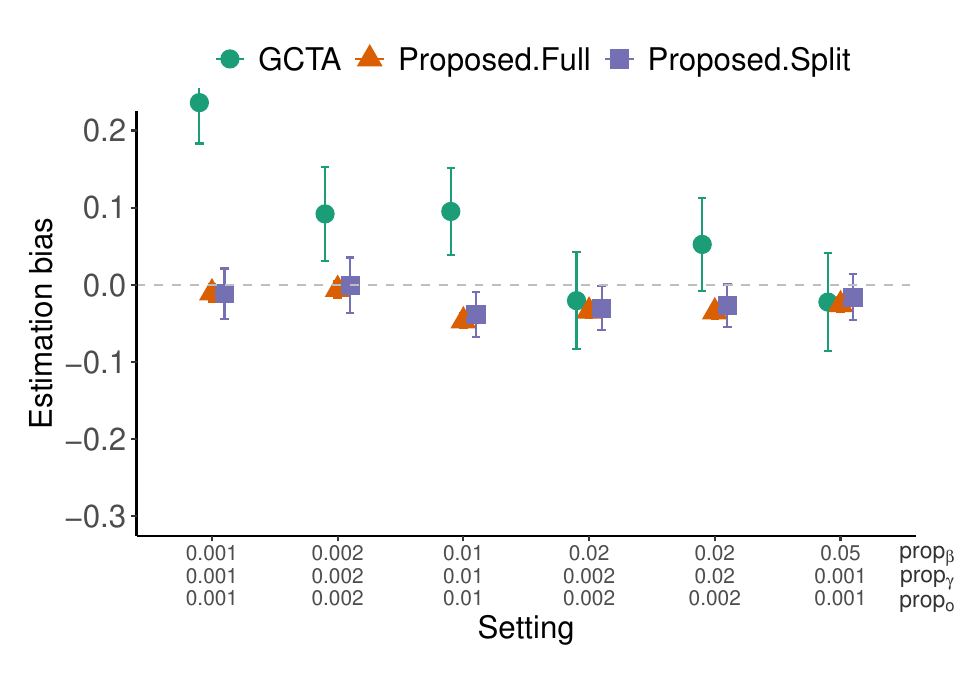} &  
  \includegraphics[height=0.30\textheight,width=0.5\textwidth]{./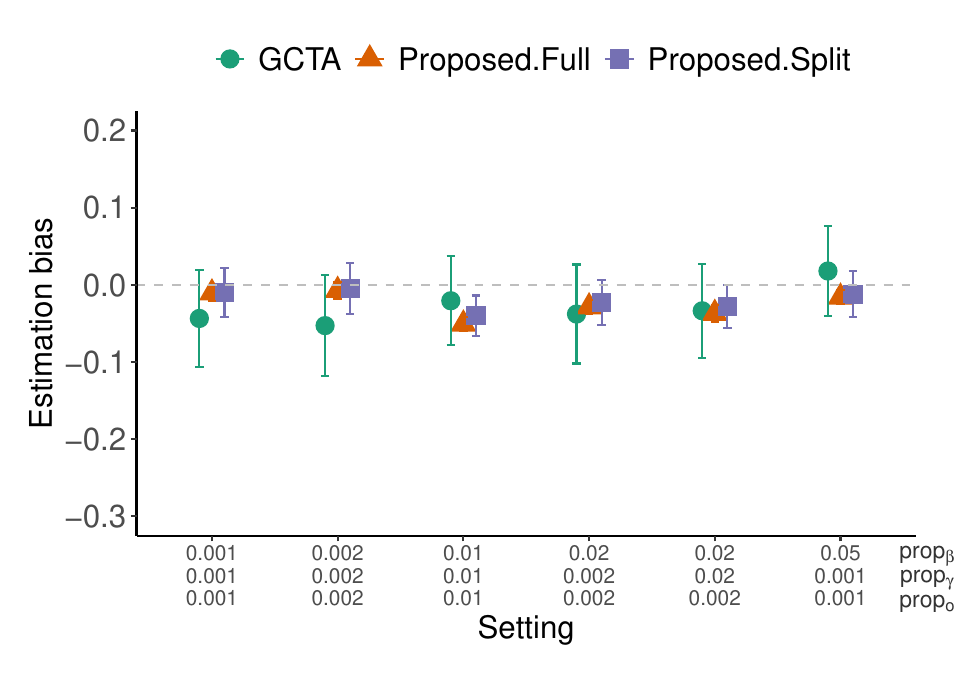} \\
  (c) LDAK 2, linear models & (d) Single mis-specified composite model \\
  \includegraphics[height=0.30\textheight,width=0.5\textwidth]{./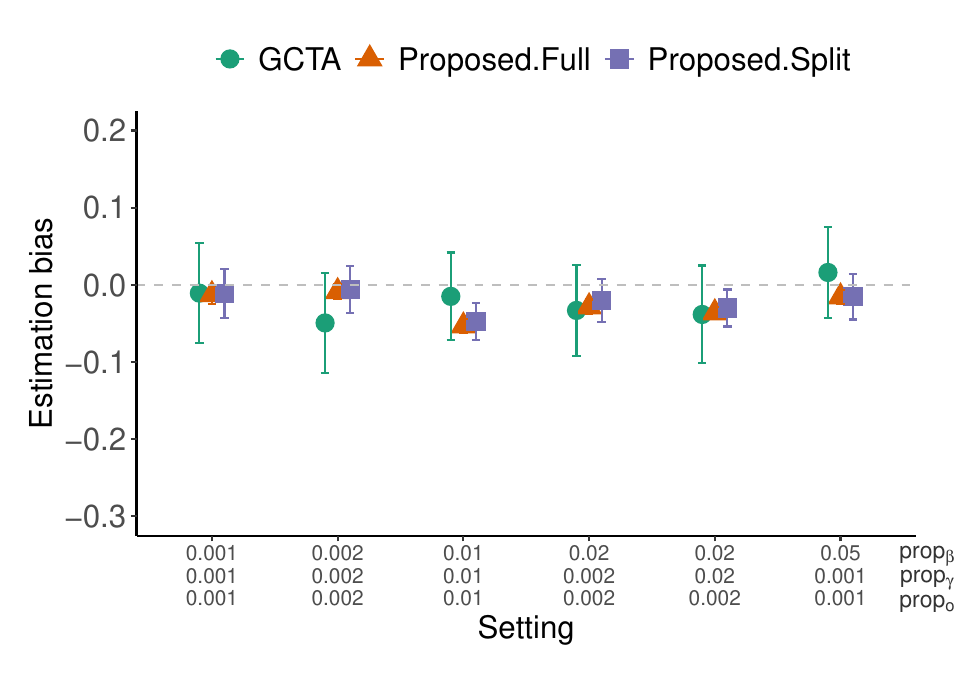} &  
  \includegraphics[height=0.30\textheight,width=0.5\textwidth]{./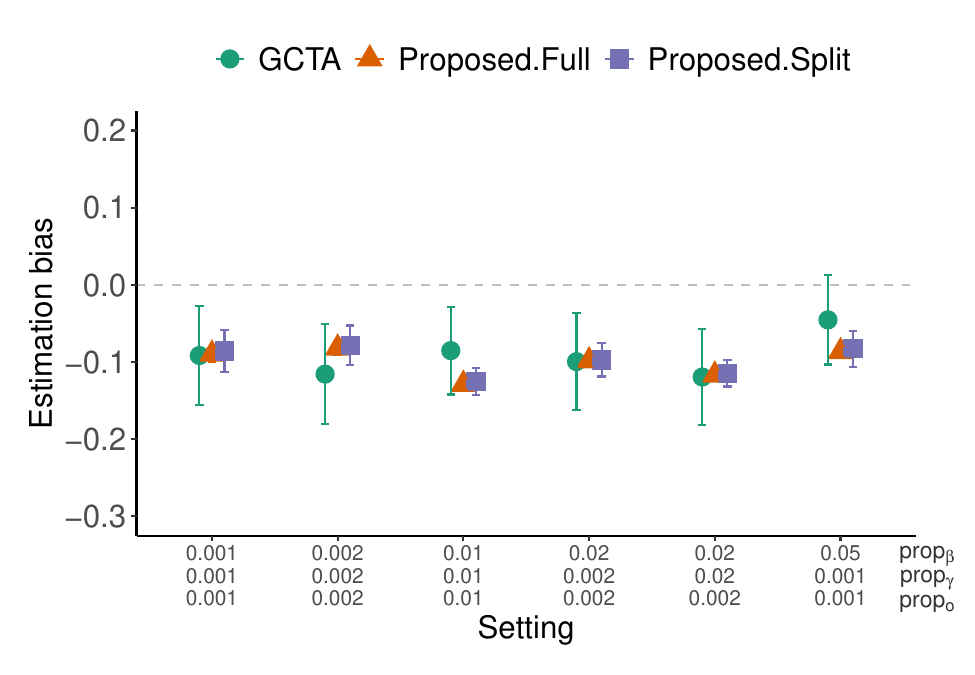} \\
  (e) Single mis-specified dominant model  & (f) Two mis-specified dominant models \\
\end{tabular}
}
\caption{Comparison of estimation accuracy for different genetic models  when  the two traits are measured on two independent  set of individuals ($N_y = N_z = 8000$).  Each point represents the mean of the average bias, with error bars representing 1.96 times the standard error on both sides. }
\label{fig:LL-noover}
\end{figure}
}

\noindent {\bf Estimation errors}: The results of estimation accuracy when both traits are measured on the same set of individuals (overlapping samples) are present in Figure \ref{fig:LL-over}.
 Under correct linear model specifications with different effect distributions, Figures \ref{fig:LL-over} (a), (b) and (c) show  the proposed methods provide approximately unbiased estimates of genetic covariance when two models are not highly polygenic, not sensitive to the underlying effect distribution.
 The full-sample estimator and split-sample estimator are almost unbiased  when both traits have sparse or moderate polygenic signals. Even if one trait is highly polygenic $(\prop = 0.1)$, both estimators still provide satisfactory estimates of  the genetic covariance. In addition, when both traits are highly polygenic $(\prop_\beta, \prop_\gamma) = (0.1, 0.1)$, the estimation is still accurate if the genetic covariance is due to  the shared variants with large effect sizes $(\prop_o = 0.0025).$ As a comparison, the estimates from  GCTA could be biased downward (Figure \ref{fig:LL-over} (b) ) or upward (Figure \ref{fig:LL-over} (c)) when the distributions of effect sizes  depend on MAF and LD.

  The full-sample estimator and the split-sample estimator have their own strengths and limitations. Consistent with the theoretical analysis,  the split-sample estimator has a smaller bias under unbalanced sparsity settings, i.e. one trait is highly polygenic while the other trait is sparse with $(\prop_\beta, \prop_\gamma, \prop_o) = (0.1, 0.0025, 0.0025)$. We provide more results in Figure S6 of the supplementary material to highlight this point. 
 On the other hand, the full-sample estimator has better empirical performance when the genetic covariance is contributed by weak and dense effects $(\prop_\beta, \prop_\gamma, \prop_o) = (0.05, 0.05, 0.05)$. In Section C of the Supplemental Materials,  we further investigate the different behaviors using oracle estimators in linear models, which suggests that the difference in performance is related to the correlation between the error terms in the trait models. 


Figure \ref{fig:LL-over} (d) shows both methods  give accurate estimations of the true genetic covariance  under single model mis-specification, which is explained by Corollary \ref{coroll:narrow-sense} that  the narrow-sense genetic covariance is equal to the true genetic covariance  when only one model is mis-specified.


\ignore{When both models are mis-specified, both methods also have similar bias. When both models are mis-specified, the bias is explained by the difference between the narrow-sense genetic covariance and true genetic covariance. 
}

\noindent 
{\bf Coverage of confidence intervals}: 
Furthermore, in Figure  \ref{fig:LL-over-debias},
we compare the coverage probabilities and the length of the confidence interval. When the distributional assumption is violated,  the coverage probabilities of GCTA are low due to their biased estimation. For the split-sample method,  the coverage probabilities of the confidence intervals are close to the nominal level when at least one trait has sparse signals. The results  support our argument that the proposed method does not require both models to have  sparse coefficients for valid inference.

	The full-sample approach has  coverage probabilities close to the nominal level  when both trait models are sparse, while its  performance gets worse when the bias dominates over the variance term in the unbalanced sparsity or dense effects setting.     
 Therefore we provide  a bias-adjusted confidence interval estimation based on theoretical results \citep{celentano2020lasso,tibshirani2012degrees}  and evaluate its performance.  We quantify the bias  by $\hat R$ based on the prediction risk and residuals, and propose the following  adjusted confidence interval, 	
	\begin{align}
		\label{adjust-CI}
		CI(\alpha) =  \begin{cases} [\widehat I_{\text{full}} - z_{1 - \alpha/2} \widehat \sigma - \widehat R , \widehat I_{\text{full}} + z_{1 - \alpha/2} \widehat \sigma ] & \text{ if } \widehat R > 0\\
			[\widehat I_{\text{full}} - z_{1 - \alpha/2} \widehat \sigma , \widehat I_{\text{full}} + z_{1 - \alpha/2} \widehat \sigma -\widehat R ] &  \text{ if } \widehat R < 0
		\end{cases}.
	\end{align}
	More details on the derivation of $\hat R$ can be found in Section C of the Supplemental Materials.  Figure  \ref{fig:LL-over-debias} shows the adjusted CI leads to valid coverage even when the effects are highly polygenic, while maintaining a shorter width than the estimate from  GCTA method in most cases.

Finally, if one model is mis-specified,   our  proposed method still performs well.     These results support our conclusion that the inference of genetic covariance between the linear trait and the nonlinear trait is robust to the mis-specification of the nonlinear function.

\begin{figure}
	\ignore{
	\resizebox{0.95\linewidth}{!}{
		\begin{tabular}{cc}
			\includegraphics[height=0.3\textheight, width=0.5\textwidth]{./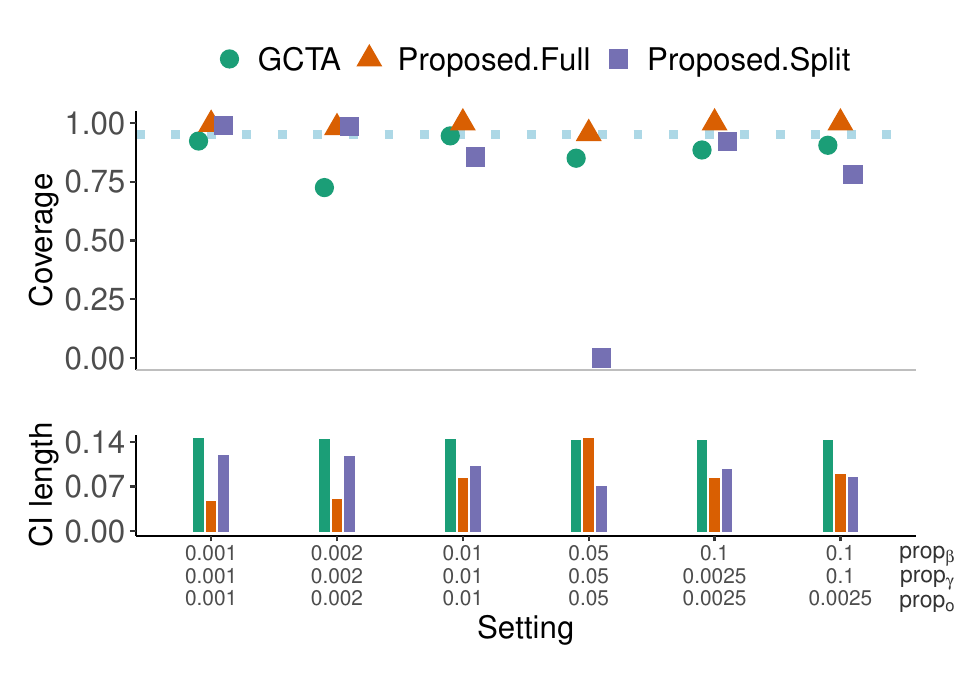} &  
			\includegraphics[height=0.3\textheight,width=0.5\textwidth]{./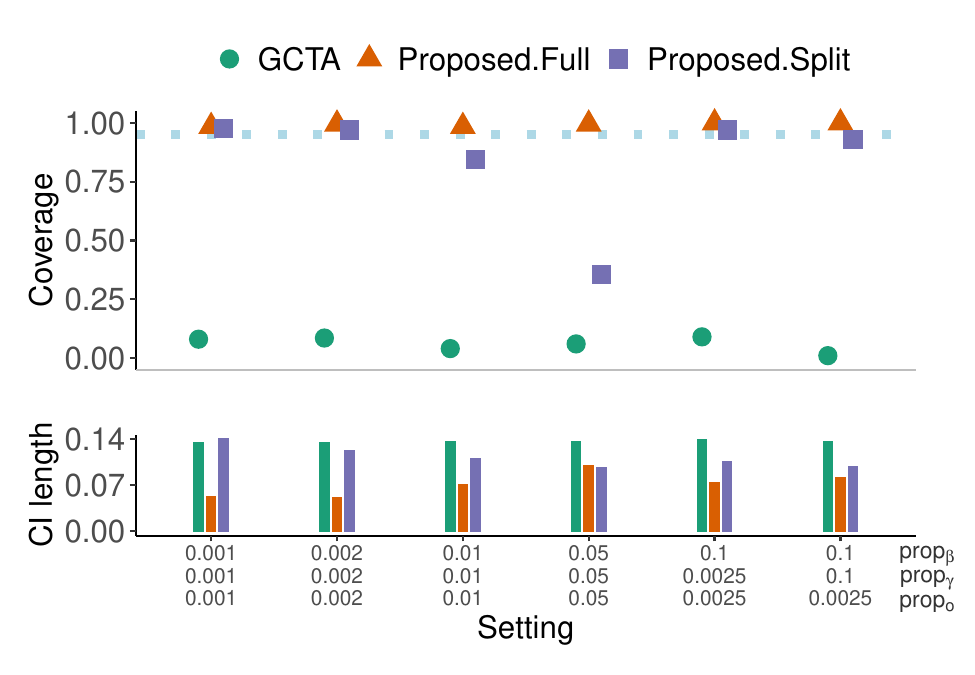} \\
		\vspace{-0.1in}
			(a) Normal coefficients, linear models & (b) LDAK 1, linear models \\
			\includegraphics[height=0.30\textheight, width=0.5\textwidth]{./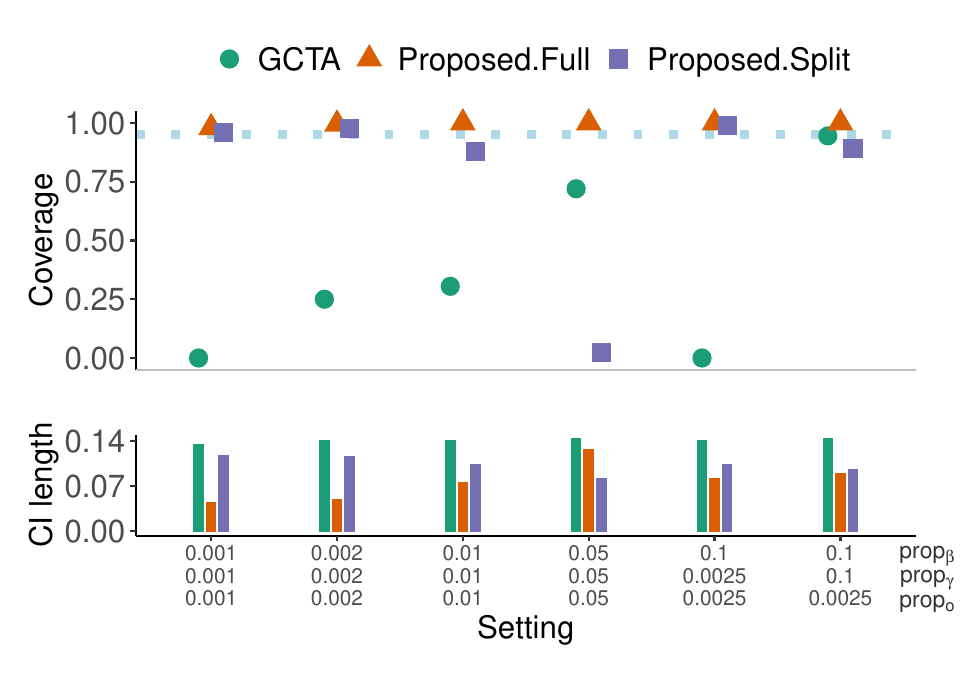} &  	\includegraphics[height=0.30\textheight,width=0.5\textwidth]{./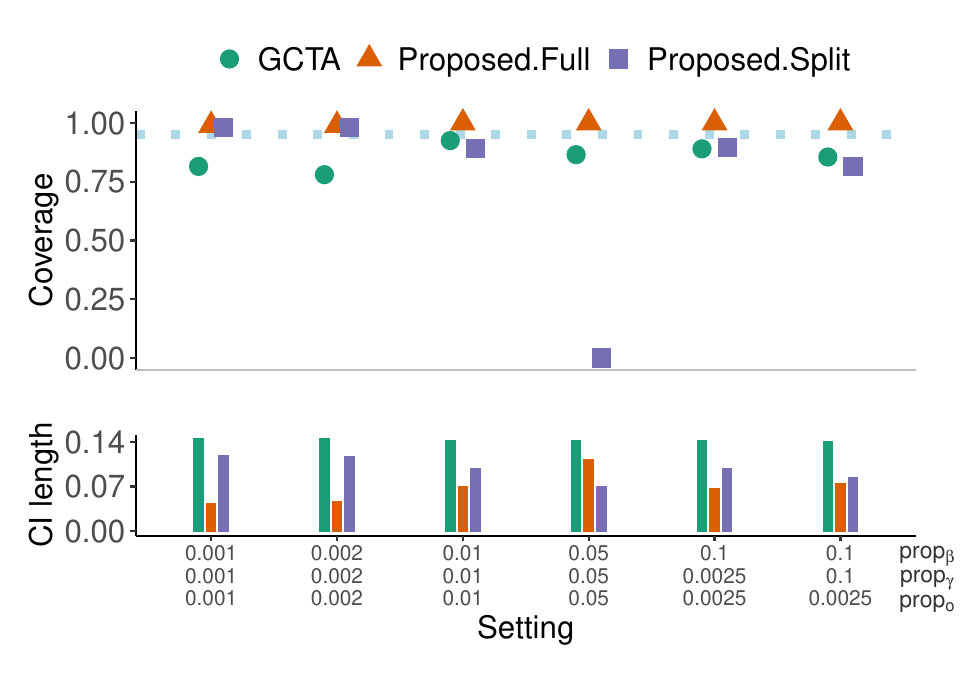} \\
		\vspace{-0.1in}
				(c) LDAK 2, linear models &  (d) Single mis-specified composite model \\
	\ignore{				\includegraphics[height=0.30\textheight,width=0.5\textwidth]{./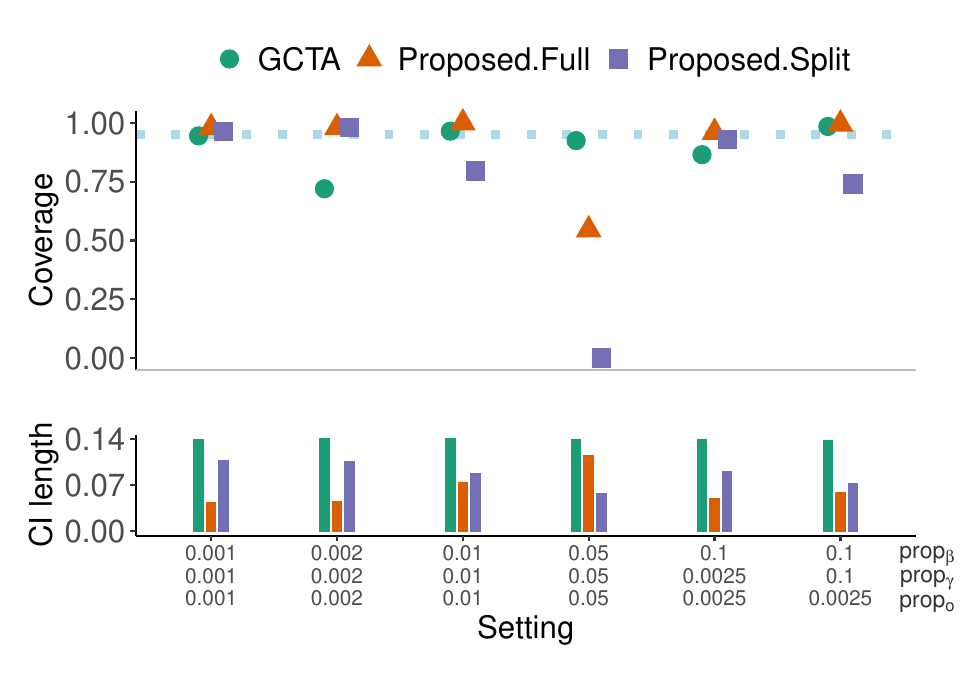} & \\
	\vspace{-0.1in}
				(e) Single mis-specified dominant model & \\
		}
		\end{tabular}
	}
}
\resizebox{0.95\linewidth}{!}{
	\begin{tabular}{cc}
		\includegraphics[height=0.3\textheight, width=0.45\textwidth]{./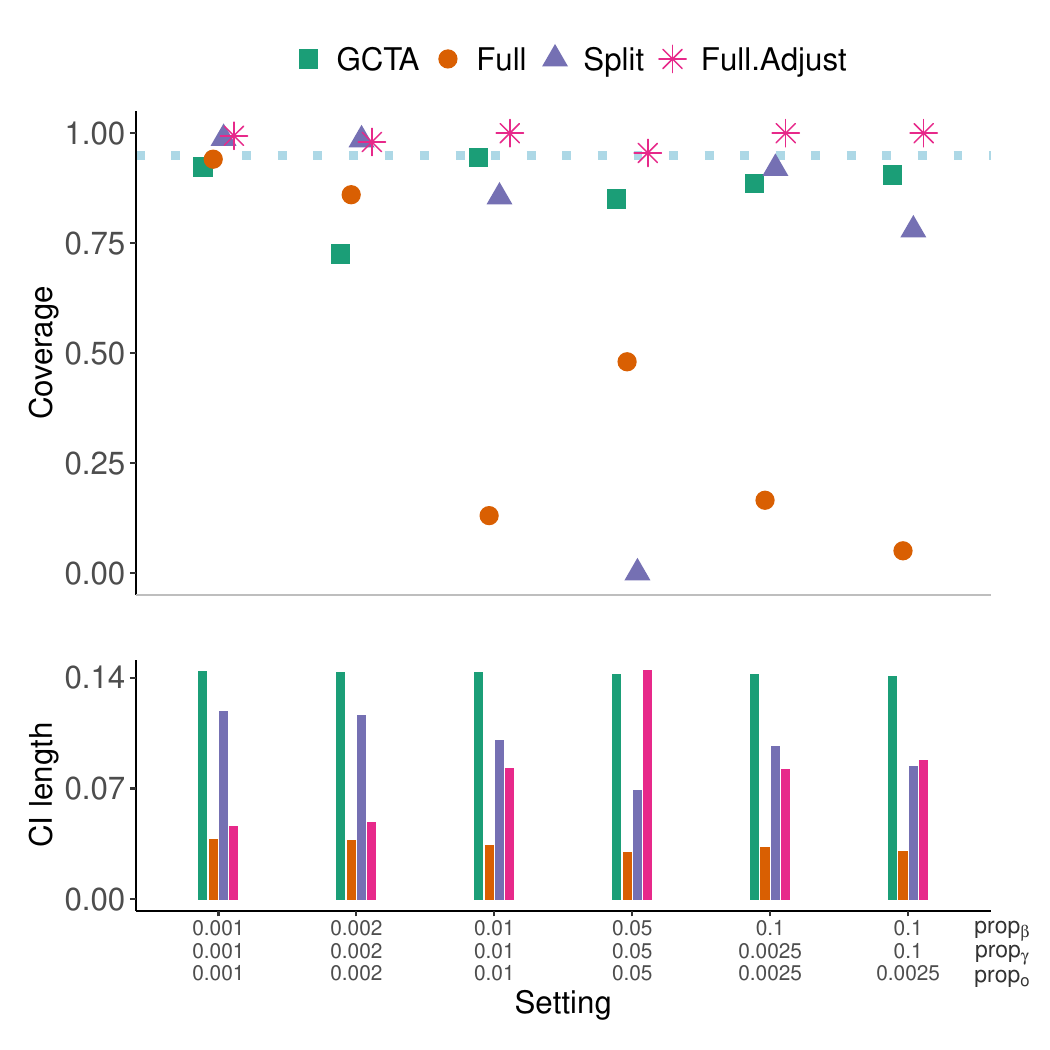} &  
		\includegraphics[height=0.3\textheight,width=0.45\textwidth]{./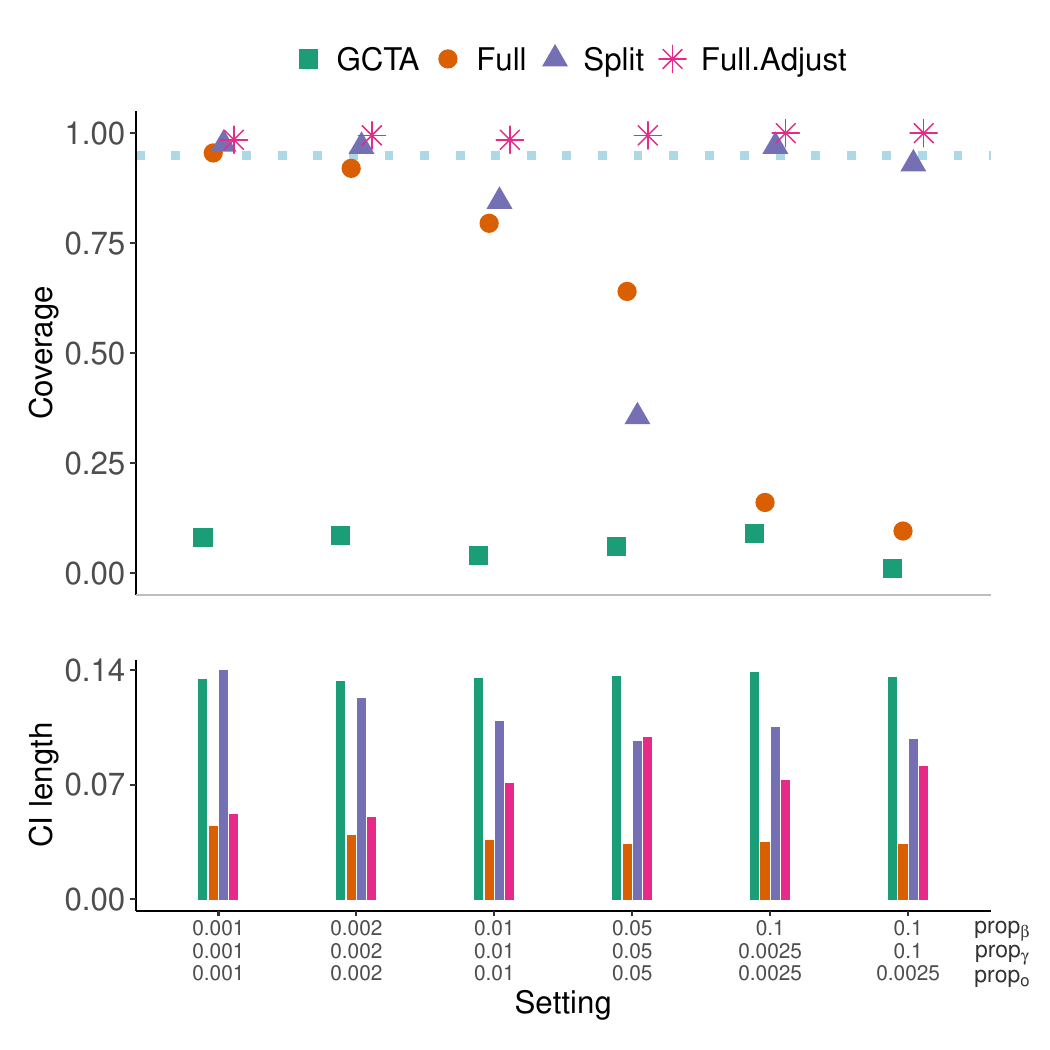}  \vspace{-0.1in}\\
		\vspace{-0.02in}
		(a) Normal coefficients, linear models & (b) LDAK 1, linear models \\
		\includegraphics[height=0.30\textheight, width=0.5\textwidth]{./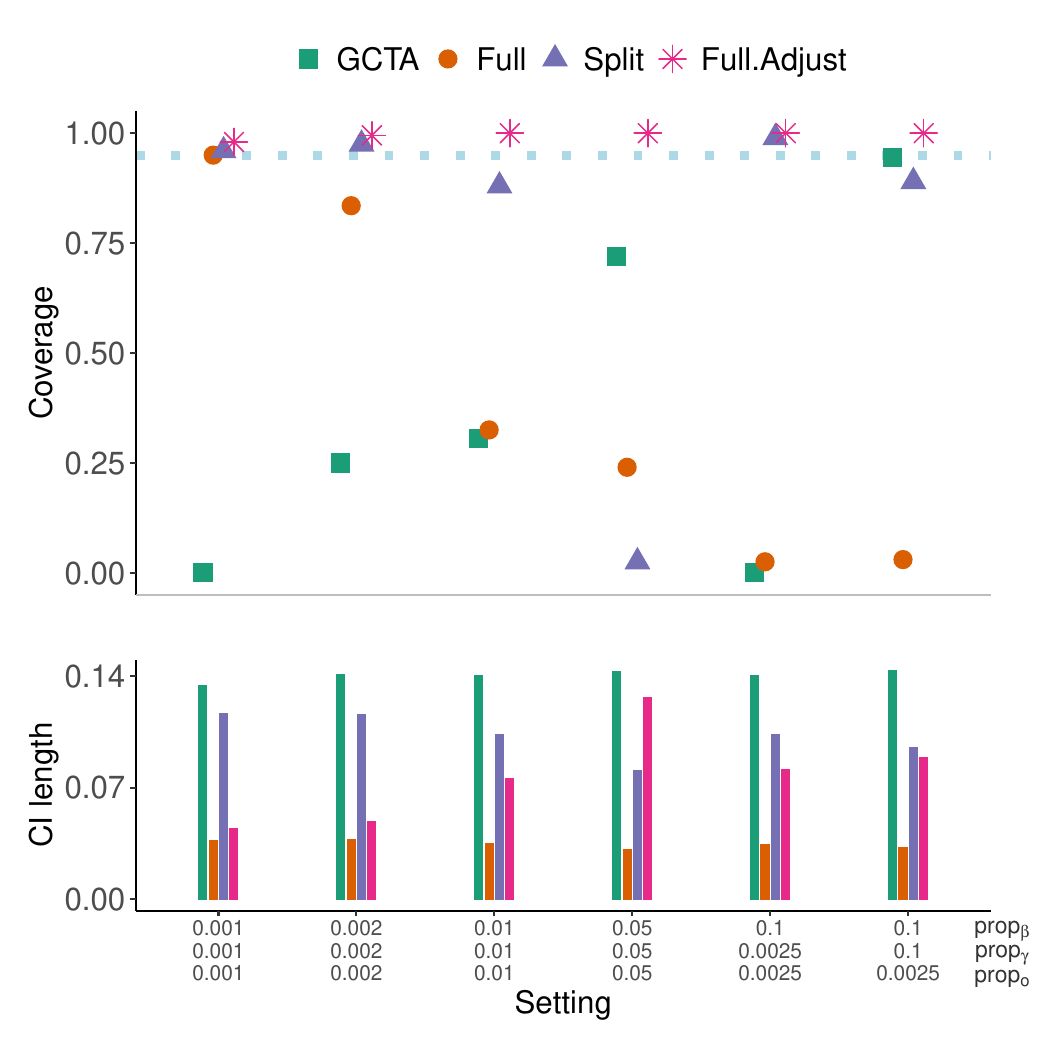} &  	\includegraphics[height=0.30\textheight,width=0.5\textwidth]{./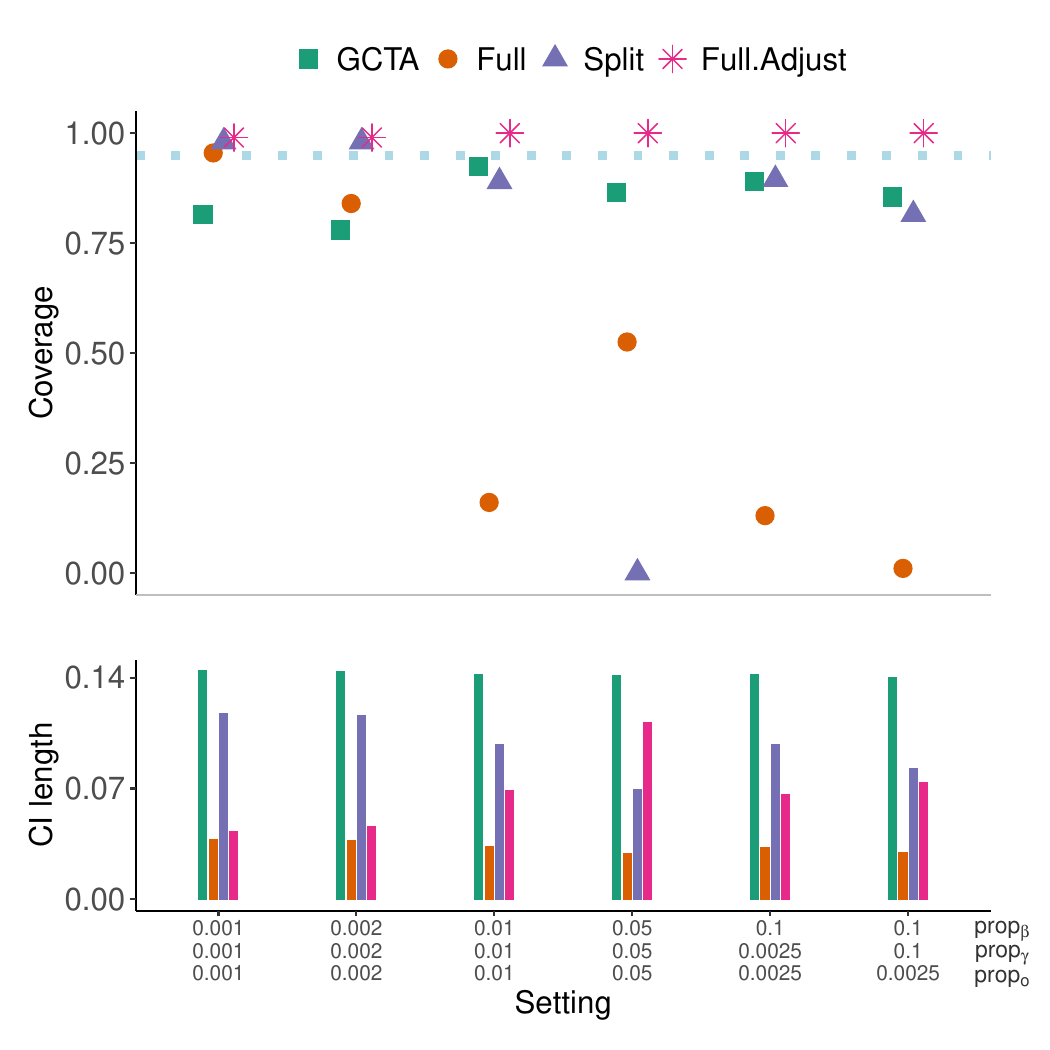}  \vspace{-0.1in}\\
		\vspace{-0.02in}
		(c) LDAK 2, linear models &  (d) Single mis-specified composite model \\
		\ignore{				\includegraphics[height=0.30\textheight,width=0.5\textwidth]{./Figure/Simu/result2023/debias-3.pdf} & \\
			\vspace{-0.1in}
			(e) Single mis-specified dominant model & \\
		}
	\end{tabular}
}

	\caption{Comparison of coverage probability and confidence interval (CI) length when both traits are measured on the same set of individuals ($N_y = N_z = 8000$). Conservative CI is constructed for the full-sample approach. The coverage probability results are reported in the upper panel and the corresponding average confidence interval length results are demonstrated by the bar plots in lower panel.   
	}
	\label{fig:LL-over-debias}
\end{figure}

\noindent {\bf Non-overlapping setting}:  Results  when the two traits are measured on two independent sets of individuals  are given  in  Figure S1  and S2 in the Supplemental Materials. The split-sample and full-sample estimators have similar estimation biases and the bias is smaller when the effects are sparse (Figure S1). However, when the models are very polygenic, the full-sample approach has low coverage probability due to a smaller variance. In contrast, the split-sample approach gives better coverage (Figure S2). Similar estimation behaviors of two methods are observed under model mis-specification, and the proposed method is robust to model mis-specification (Figure S2 (b)-(d)).

\subsection{Summary of additional simulations}
In Supplemental materials, we include simulation results for binary traits and case-control designs (Figures S3-S4, Table S2). The results  show the proposed method can still estimate the genetic covariance well.  When the models are not highly polygenic, the coverage probabilities in Figure  S4 are also close to the nominal level.  In contrast,  GCTA  seems to have a large estimation bias for the genetic covariance even under the models with normal coefficients (Figure  S3 (a)), resulting in  low  coverage probabilities  in some cases. When one model is misspecified,  GCTA estimation results in a larger bias (Figure S3 (d)). This may be  due to the  wrong working models fitted by  GCTA-GREML for the binary traits. 

We also evaluate the performance when causal variants are not independent in Figure S5. The proposed methods still perform well. When  compared  with the models with independent causal variants as shown in Figure \ref{fig:LL-over}, the estimates from   GCTA have a larger bias when the signals are sparse, leading to lower corresponding coverage probabilities.  

\section{Real data application}
\label{sec:real-data}

We analyze the outbred Carworth Farms  White (CFW)  mice data set \citep{parker2016genome} to study the  genetic covariance between various behavioral and physiological traits.  
Each CFW mouse was phenotyped for behavioral traits, bone and muscle traits, and other physiological traits, including  fasting glucose levels, body weight, tail length and testis weight. The behavioral traits consist of conditioned fear, methamphetamine (MA) sensitivity and prepulse inhibition phenotypes. The bone or muscle traits include the weight of five hindlimb muscles and bone mineral density. Besides, a binary trait that signals abnormally high bone mineral density is generated. For each of the continuous phenotypes,  we adjust for baseline weight, experimenters, sacrifice age and 1st PC of genotypes and normalize the residuals.  After the pre-processing, the data set consists of 1038 mice with 79,824 genetic variants (SNPs), 66 continuous phenotypes and one binary trait.  The data set includes various levels of missingness $(22 \sim 211)$ in the trait values. The observations with missing traits are not used in the GLM fitting, but they are used in estimating the genetic covariance using the imputed trait values. 

\ignore{

The behavioral traits were measured through a series of experiments \citep{parker2016genome}. For methamphetamine sensitivity traits, the locomotor activity of each mouse was measured by total distance traveled  (in cm) over the 30-minute interval immediately after the injection in each day. The activity and time spent in the center of the arena on the first day were also recorded. The mice were injected with saline on first two days and received MA on the third day. 
The conditioned fear traits measure the recall of the fearful memory by measuring the mice's freezing behavior in response to the stimulus. After the mice being placed in the test chambers,  the average proportion of time freezing over the 30-180 seconds interval is recorded at each day. Besides, binary traits were also created for fear conditioning metrics that show a "bimodal" freezing distribution. Finally, the 
prepulse inhibition (PPI) is a reduction in startle response that occurs when a non-startling lead stimulus (“prepulse”) precedes a startling stimulus. The  PPI traits were defined as the difference of the average startle amplitude during the 3, 6 or 12-db prepulse trials to the average startle amplitude during the pulse-alone trials.
}

The phenotypic correlations among the continuous traits  are shown   in Figure \ref{fig:pheno-cor} (a). The heritability estimates of continuous traits  using GCTA and our proposed method are present in Figure \ref{fig:pheno-cor} (b).  Due to the limited sample size, the proposed method may not be able to capture the signal when traits have dense and weak effects. 
For the subsequent analysis, we consider 22 continuous traits with estimated heritability larger than 0.10 by the proposed method.

\begin{figure}[hbt!]
	\resizebox{\linewidth}{!}{
	\begin{tabular}{cc}
		\includegraphics[height=0.25\textheight, width=0.5\textwidth]{./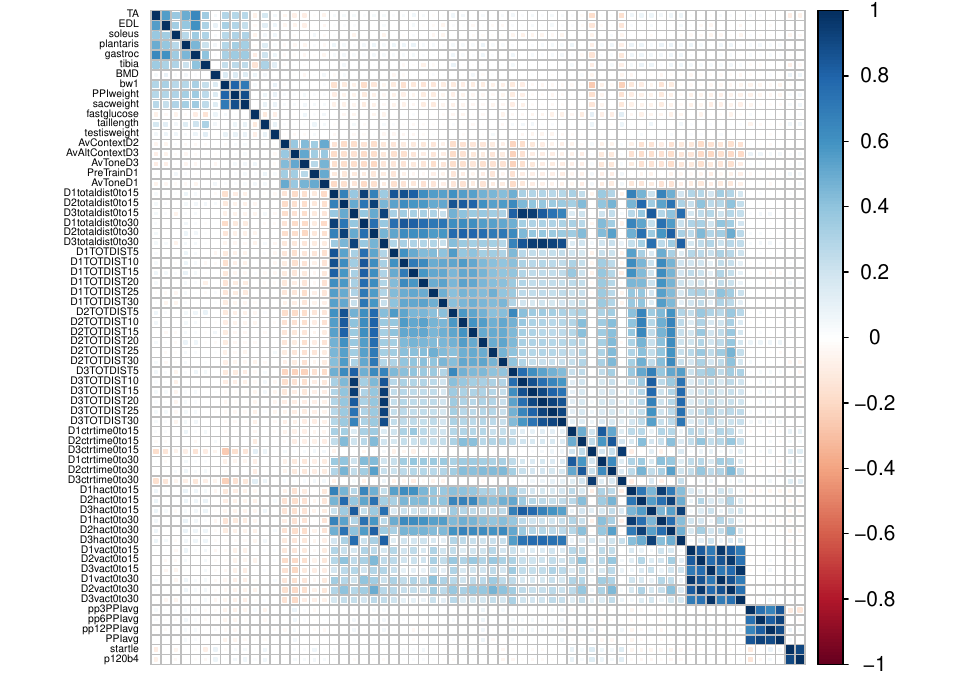} &
		\includegraphics[height=0.25\textheight, width=0.5\textwidth]{./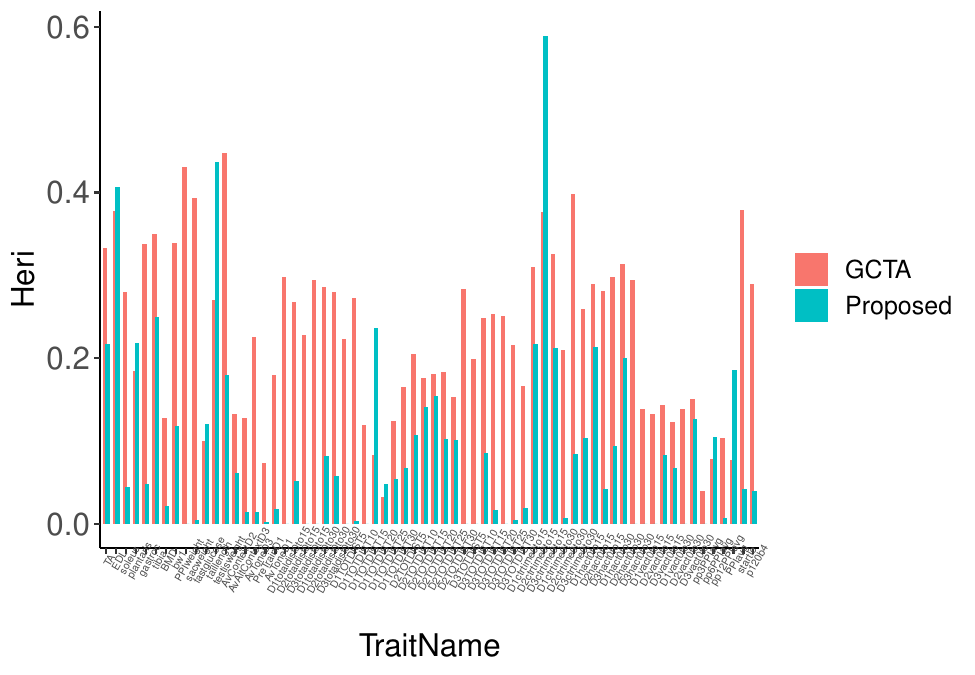} \\
		(a) Phenotype correlations among different traits. & (b) Estimated heritability of different traits. \\
	\includegraphics[height=0.25\textheight, width=0.5\textwidth]{./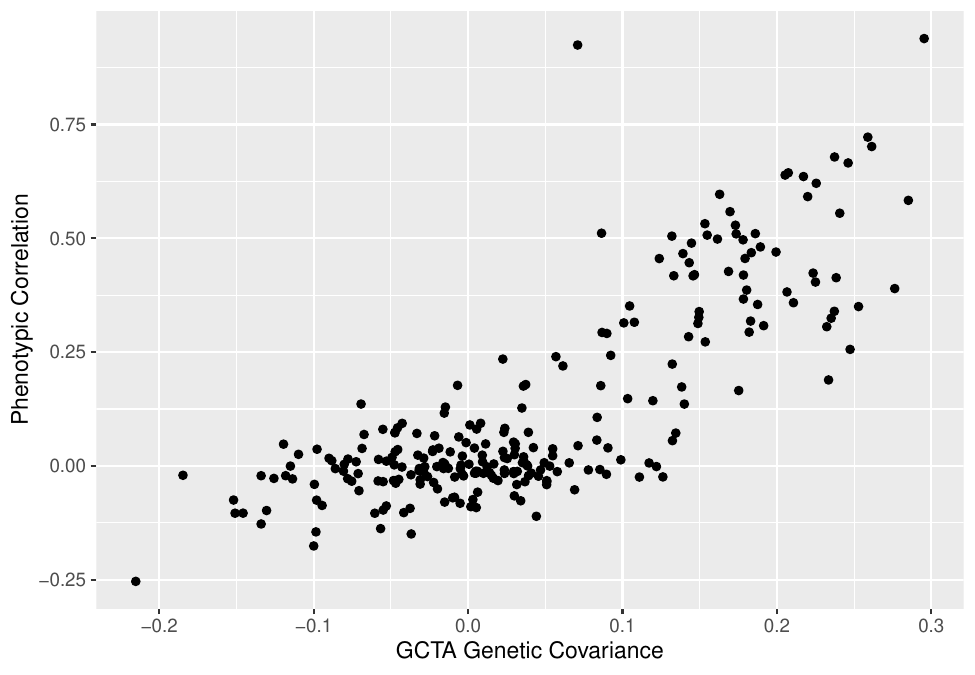} &
	\includegraphics[height=0.25\textheight, width=0.5\textwidth]{./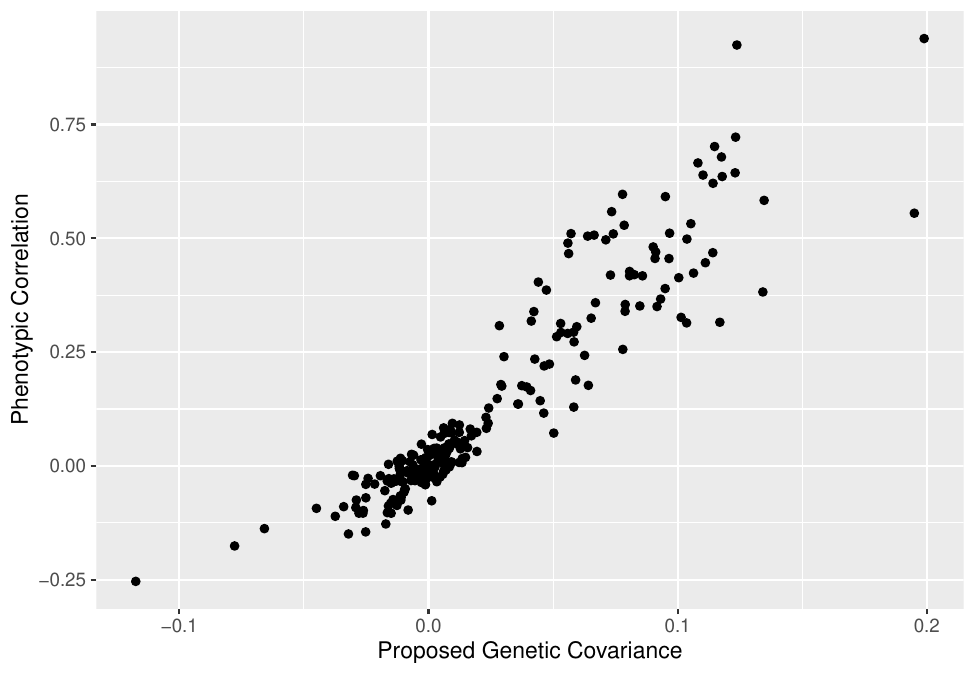} \\
	(c) GCTA estimator  &
	(d) Proposed estimator  \\
			\includegraphics[height=0.25\textheight, width=0.5\textwidth]{./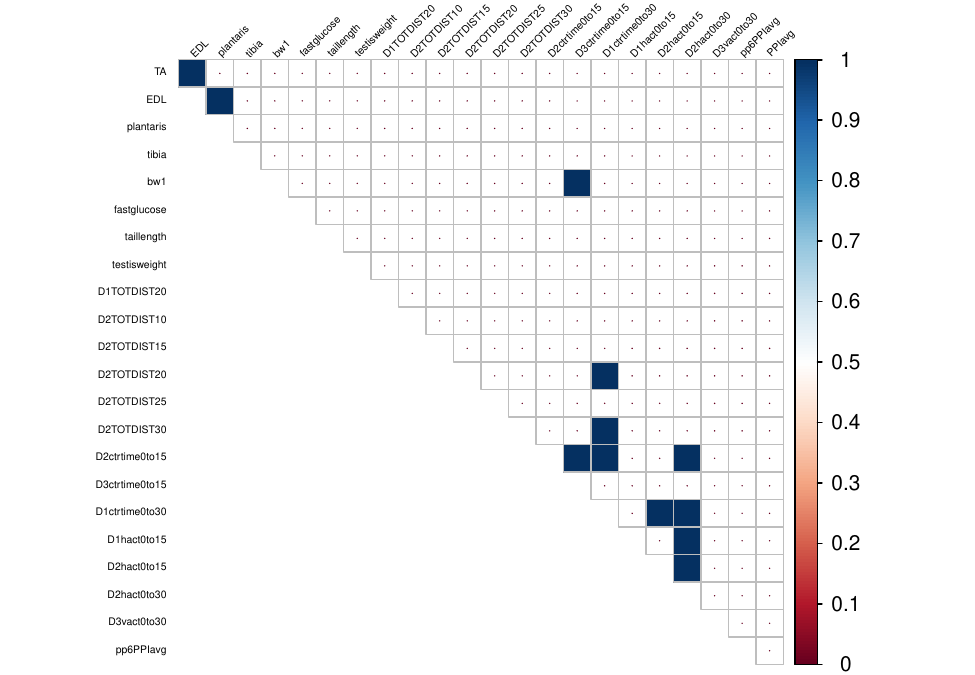} &
	\includegraphics[height=0.25\textheight, width=0.5\textwidth]{./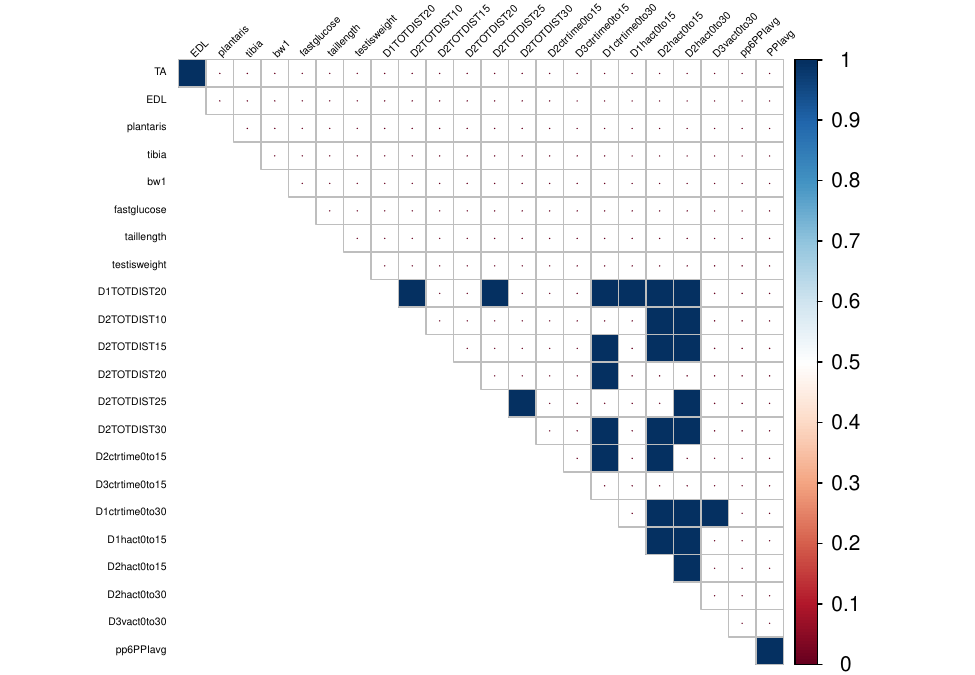} \\
	(e) GCTA estimator (Bonferroni)  &
	(f) Proposed estimator (Bonferroni)\\
\ignore{	  \includegraphics[height=0.25\textheight, width=0.5\textwidth]{./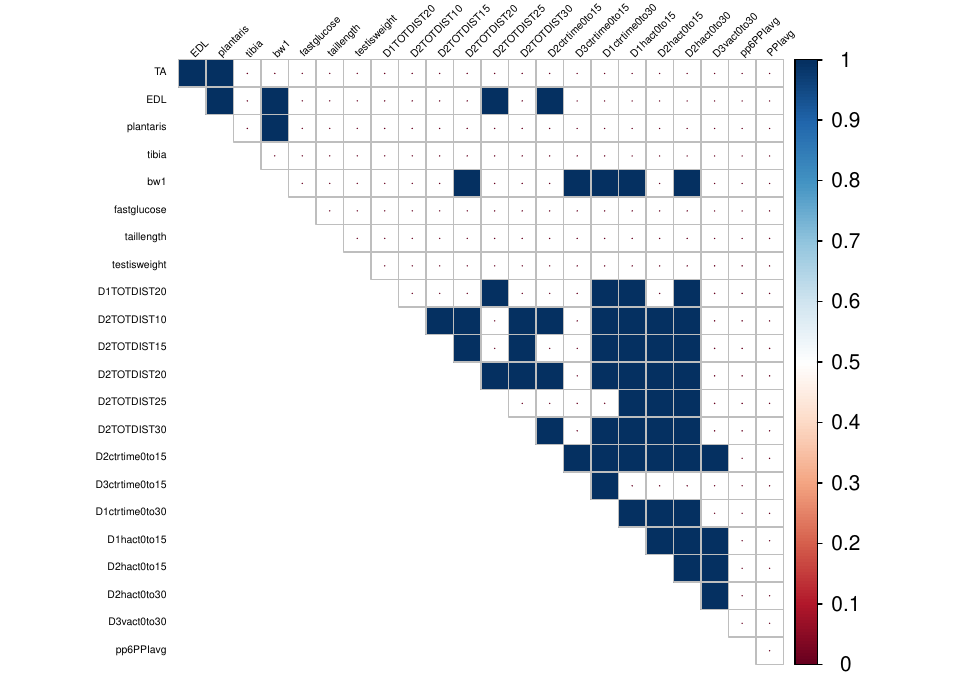} &
	\includegraphics[height=0.25\textheight, width=0.5\textwidth]{./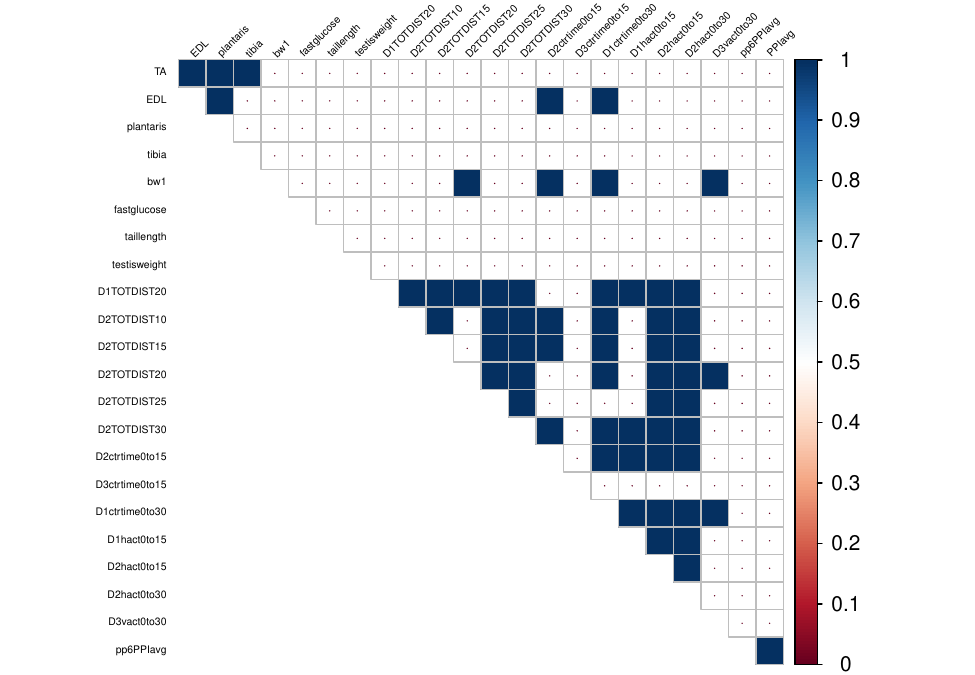} \\
	(c) GCTA estimator (BH)  &
	(d) Proposed estimator (BH)  
}
\end{tabular}
}	
	\caption{Results from analysis of the mouse GWAS data. (a) Phenotype correlations of 66 continuous  traits;   (b) Estimated heritability  by GCTA method and the proposed method;
	(c)-(d)	Scatter plots of phenotypic covariance and the estimated genetic covariance using GCTA  (\ and the proposed estimator  for each pair of continuous traits with the estimated heritability larger than $0.1$;
(e) -(f) Identified significant pair of traits with Bonferroni correction at the level of $0.05$ using GCTA estimator  and the proposed estimator.  There are 12 pairs identified by  GCTA and 27 pairs identified using the proposed method.	
}\label{fig:pheno-cor}
\end{figure} 

\subsection{Genetic covariance among different traits}
To illustrate our method,  we present the genetic covariance analysis for four selected traits. For the muscle traits, we choose the continuous trait extensor digitorum longus (EDL) and the binary trait signaling abnormal bone.  Other traits  include the physiological trait of testis weight,   the MA sensitivity trait of  the distance traveled, 0–30 min, on day 3 of methamphetamine sensitivity tests. For each selected trait, we calculate its genetic covariance with all other traits.
For the binary trait, the estimated genetic covariance is calculated based on the observed scale.
Model fitting and estimation of $\hat \beta$ and $\hat \gamma$ are implemented using the package \textbf{glmnet} with 10-fold cross-validation. 

The genetic covariance results for the four traits are summarized in Figure \ref{fig:all-gecv} (a) - (d). For continuous traits, we report the bias-adjusted confidence interval in \eqref{adjust-CI}.
Figure \ref{fig:all-gecv} (a) shows that the testis weight has no significant genetic covariance with other continuous traits. Figure \ref{fig:all-gecv} (b) suggests that the abnormal bone trait may be mainly related to the muscle traits. The scale of the genetic covariance is small because the abnormal bone trait is binary.
In Figure \ref{fig:all-gecv} (c), the muscle trait EDL has a significant genetic covariance with other muscle traits like TA but does not have a shared genetic architecture with behavior traits. This agreed with the  pleiotropy effects among the muscle traits \citep{parker2016genome}.
In Figure \ref{fig:all-gecv} (d), the MA sensitivity trait is closely related to other MA sensitivity behavior traits measure on Day 1 and 2 and is not related with trait measured on the third day. This result is reasonable because the mice were injected with saline on first two days and received MA on the third day so the activity on the day 3 should be less related to the baseline measurement on day 1 and 2.  

These results  show that our proposed estimator of genetic covariance works well for both continuous and binary traits. The results further confirm that many behavior-related traits share common genetic variants and physiological traits also share genetic effects. However,  the genetic covariance between physiological traits and behavioral traits is small.

\begin{figure}[hbt!]
	\resizebox{\linewidth}{!}{
		
\begin{tabular}{cc} 
\includegraphics[height=0.30\textheight, width=0.50\textwidth]{./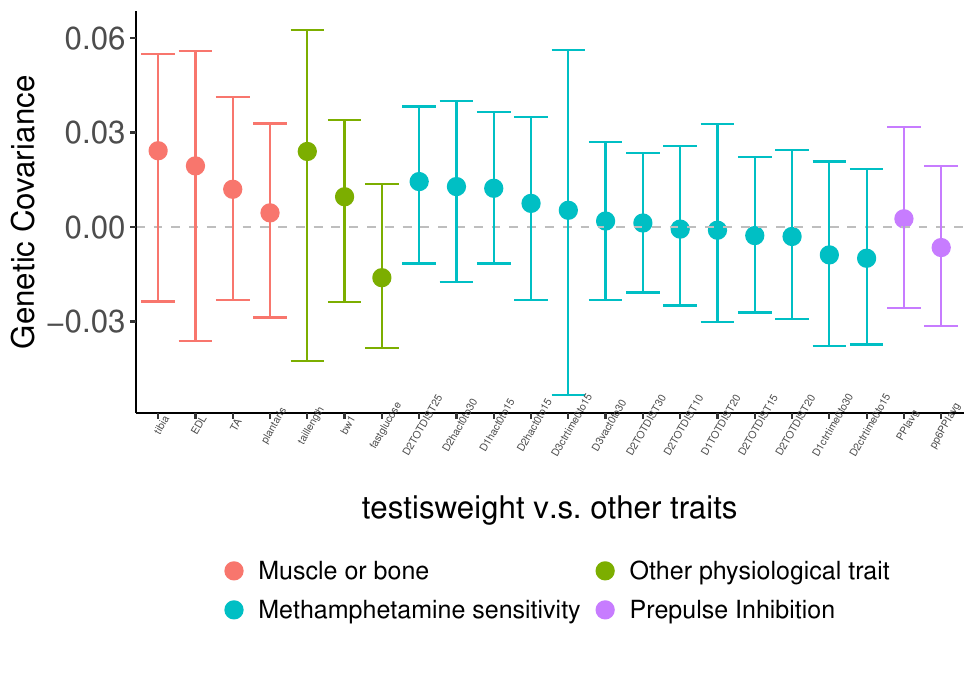}
&
\includegraphics[height=0.30\textheight, width=0.50\textwidth]{./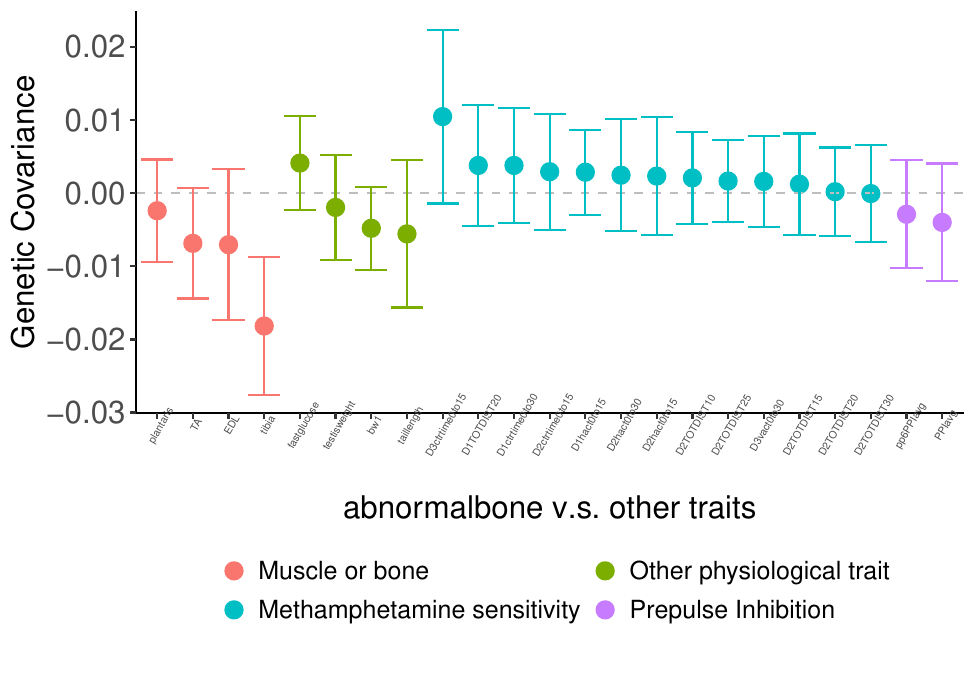}\\
(a) Physiological trait,  testis weight.
&
(b) Binary bone trait, abnormal bone.
\\
\includegraphics[height=0.30\textheight, width=0.50\textwidth]{./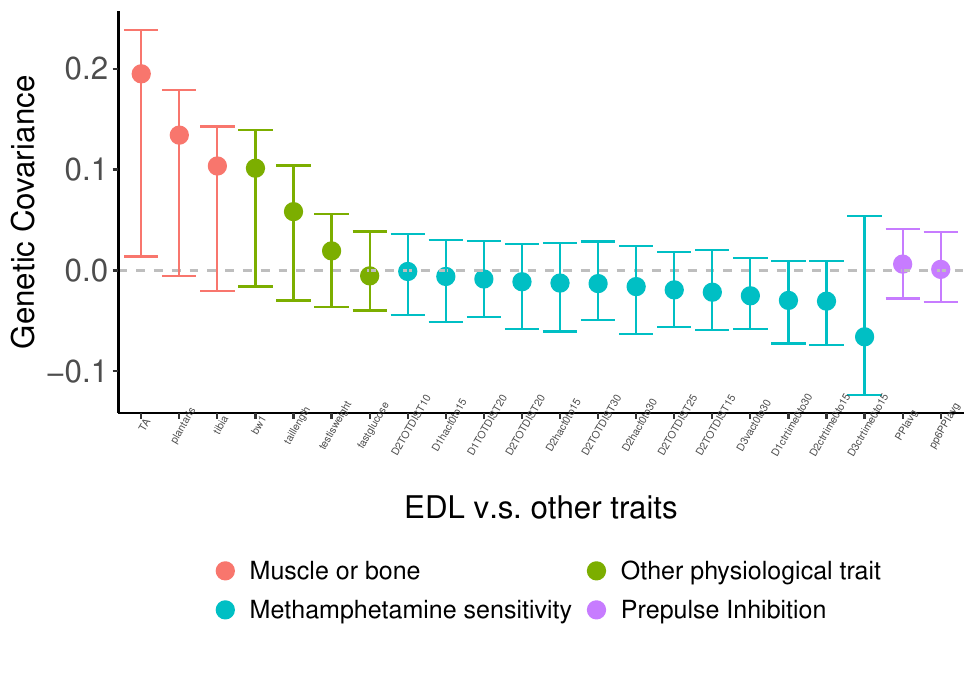}
&\includegraphics[height=0.30\textheight, width=0.50\textwidth]{./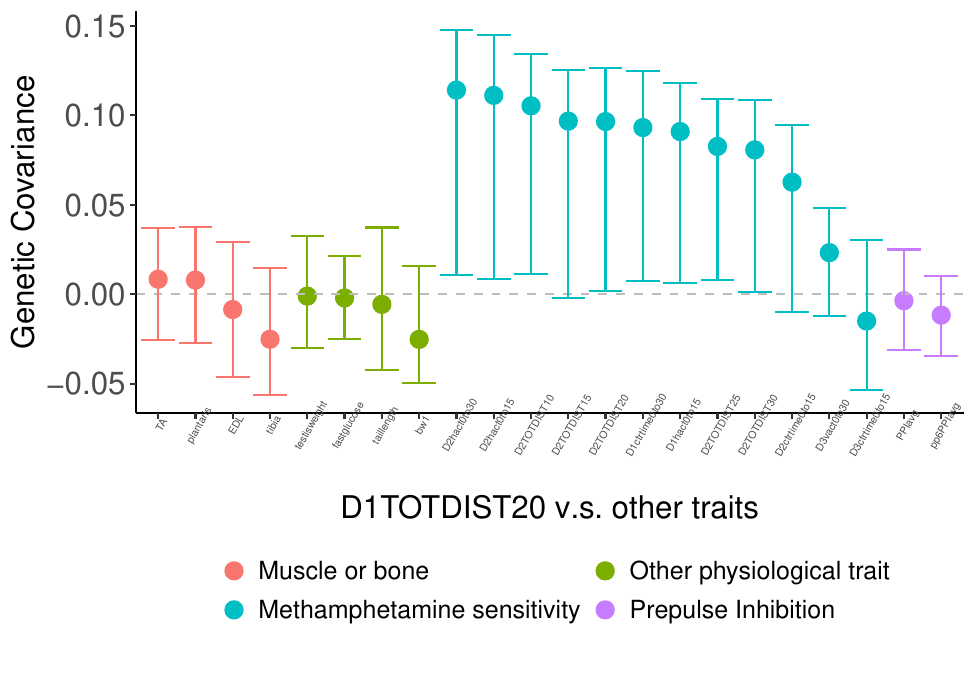}
\\
{(c) Muscle trait} &{(d) Methamphetamine sensitivity trait}\\
weight of extensor digitorum longus muscle.
& the distance traveled, 0–20 min. 
\end{tabular}
}
\caption{Estimated genetic covariance of four selected traits  with other traits, where  bars represent the confidence intervals. For continuous traits, we report the conservative adjusted confidence interval. The confidence level $\alpha$ is determined by Bonferroni  correction $0.05/21$.}
\label{fig:all-gecv}
\end{figure}

%

\subsection{Comparison with GCTA and LDSC estimates}

We compare the estimates of genetic covariance from the  proposed method, GCTA and the LD-score regression.    The scatter plots in Figure \ref{fig:pheno-cor} (c)-(d)  show  the relationship between estimated genetic covariance and the observed phenotypic covariance. When  the phenotypic correlation is around $0$, the proposed method gives  estimates of the genetic covariance around $0.$ However, the estimates from   GCTA  could be from $-0.2$ to $0.15$, leading to inflated estimates.  GCTA and LDSC estimations for four selected traits above are given in Figure S7 and S8.
Additional  comparisons of genetic covariance estimates by three methods are given in Figure S9.  

We also compare the performances for identifying the non-zero genetic covariance based on the  confidence interval estimation with  Bonferroni correction for multiple comparisons (see Figure 
\ref{fig:pheno-cor}). The bias-adjusted confidence interval is used for the proposed method.  Our method has identified more  trait pairs (27 pairs) than GCTA (12 pairs) with a significant genetic covariance.  
The proposed method identifies more pairs related to the traveling distance traits measured in MA sensitivity experiments.   Finally, as shown by Figure S10, the LDSC method only detect 6 pairs due to the fact that it uses summary statistics only.



\section{Discussion}
\label{sec:discuss}

This paper proposes a general regression-based estimation and inference procedure for the genetic covariance and the  narrow-sense genetic covariance based on GWAS data. The proposed estimator enjoys asymptotic normality under proper conditions and is robust to model mis-specifications. Numerical studies are conducted to explore its empirical performance under various sparsity levels. 
	The accuracy of the proposed regression-based method depends on the precision of the fitted model and it works best under uncertain sparsity conditions, outperforming the random-effects model-based estimation. The proposed method may underestimate the true genetic covariance when it is contributed by weak and dense effects, in which case the fitted model would be inaccurate for limited sample sizes.  	
In numerical experiments, we also find the robustness of the full-sample estimator outside  the sparsity regime. Further theoretical investigation would be interesting. 

The proposed estimator $\Iss$ in (\ref{eq-Ihat}) only uses the predicted trait values and residuals. We consider the $\ell_1$-penalized regression in (\ref{eq:def-hat-beta-gamma}) and (\ref{eq:def-hat-beta-gamma2}) for its convenience in prediction and estimation, and its  theoretical guarantees under the sparsity assumptions.  It is however,  possible to apply other appropriate penalties or more flexible machine learning methods in the first step. Such machine learning approaches can be theoretically justified if at least one model provides accurate predictions as studied in the double machine learning literature \citep{chernozhukov2018double}.

 The practical implementation of the proposed methods for large-scale genotype data involves an  efficient implementation of penalized GLM regression, which has been shown to be feasible for the UK Biobank genotype data \citep{prive2019efficient,qian2020fast}. Computational tricks such as using marginal association statistics or well-designed iteration rules have been applied to improve efficiency. The tuning parameter selection using  cross-validation  is computationally intensive. Further exploration of other efficient tuning parameter selection methods for  GLMs is warranted.

\if1\blind
{
\section*{FUNDING}
This research was supported by NIH grants R01GM123056 and R01GM129781.
Sai Li's research was also supported by the Fundamental Research Funds for the Central Universities, and the Research Funds of Renmin University of China. 
}\fi

\begin{center}
	{\large\bf SUPPLEMENTARY MATERIAL}
\end{center}
Supplement to ``A Regression-based Approach to Robust Estimation and Inference for Genetic Covariance''.
In the Supplementary Materials, we provide the proofs of theorems, more discussions on estimation bias and imputation,  and more results for numerical experiments and data applications.



\setlength{\bibsep}{0.3ex plus 0.2ex}
\bibliographystyle{chicago}
\bibliography{GRGLM}

 \end{document}